\author{Laura Man\v cinska, Thor Gabelgaard Nielsen,\thanks{Parts of this paper have been submitted as a Master's thesis at the University of Copenhagen by the second author.} and Jitendra Prakash}
\title{Glued magic games self-test maximally entangled states}
\date{\today}
\renewcommand\footnotemark{}
\newcommand{\numberhere}{\stepcounter{equation}\tag{\theequation}}
\newcommand{\bra}[1]{\left\langle{}#1\right|}
\newcommand{\ket}[1]{\left|{}#1\right\rangle}
\newcommand{\braket}[1]{\left\langle{}#1\right\rangle}
\newcommand\tensor\otimes
\newcommand\bigtensor\bigotimes
\newcommand{\squeeze}[2]{\bra{#1} #2 \ket{#1}}
\newcommand\dsum\oplus
\newcommand\bigdsum\bigoplus
\DeclareMathOperator\Ran{ran}
\newtheorem{theorem}{Theorem}
\newtheorem{corollary}[theorem]{Corollary}
\newtheorem{lemma}[theorem]{Lemma}
\theoremstyle{definition}
\newtheorem{definition}[theorem]{Definition}
\newtheorem{example}[theorem]{Example}
\numberwithin{theorem}{section}
\numberwithin{equation}{section}
\DeclareMathOperator{\Span}{span}
\DeclareMathOperator{\Supp}{supp}
\newcommand\aux{\mathit{aux}}
\newcommand\norm[1]{\ensuremath{\left\|{}#1\right\|}}
\newcommand\comm[1]{\ensuremath{\left[#1\right]}}
\newcommand\markold[1]{
}
\newcommand\shinynew[1]{
#1
}
\begin{document}
\maketitle
\begin{abstract}
Self-testing results allow us to infer the underlying quantum mechanical description of states and measurements from classical outputs produced by non-communicating parties.
The standard definition of self-testing does not apply in situations when there are two or more inequivalent optimal strategies.
To address this, we introduce the notion of self-testing convex combinations of reference strategies, which is a generalisation of self-testing to multiple strategies.
We show that the Glued Magic Square game \cite{algebraic-chsh} self-tests a convex combination of two inequivalent strategies. As a corollary, we obtain that the Glued Magic square game self-tests two EPR pairs thus answering an open question from \cite{algebraic-chsh}.
Our self-test is robust and extends to natural generalisations of the Glued Magic Square game.
\end{abstract}

\section{Introduction}
In this paper, we consider the Glued Magic Square game which was originally constructed in \cite{algebraic-chsh} by building upon the Mermin-Peres Magic Square game \cite{mermin,peres}. The authors of \cite{algebraic-chsh} exhibit two inequivalent perfect quantum strategies for the Glued Magic Square game, thus showing that the game does not self-test any quantum strategy. As noted there, the two inequivalent quantum strategies still use equivalent states -- which is in fact unavoidable as we show that the Glued Magic Square game self-tests the maximally entangled state of local dimension four (Corollary \ref{cor:gms-state-selftest}). Furthermore we completely characterise the possible optimal strategies, but the usual self-testing definition is clearly insufficient, so we present a generalisation which allows us to self-test a convex combination of strategies. It should be noted that variations in the definition of self-testing has already appeared in the literature.

The Magic Pentagram game introduced by Mermin \cite{mermin} is very similar to the Magic Square game. It is then natural to consider replacing either one or both Magic Square parts in the Glued Magic Square game by Magic Pentagrams. While a construction similar to that of \cite{algebraic-chsh} still shows that these games do not self-test the measurements in any of their perfect quantum strategies, our arguments for the Glued Magic Square still goes through.

\subsection{Organisation of this paper}
In this first section we have briefly introduced the motivation for the Glued Magic Square game, and introduce the notation we will use in the rest of this paper.
In Section \ref{sec:preliminaries}, we formally define the nonlocal games we will consider including Glued Magic games, along with the relevant self-testing tools. In Section \ref{sec:self-testing}, we build up our some tools for proving self-testing and prove our main result. Furthermore we give examples of possible strategies for the Glued Magic Square game. In Section \ref{sec:robustness} we sketch how to translate our results into the robust case.

\subsection{Notation}
For a natural number $n$, we write $[n] := \{1, \dots, n\}$.  A \emph{quantum state} (or simply a \emph{state}) is a unit vector in some Hilbert space. Every state $\ket\psi \in \mathcal H_A \tensor \mathcal H_B$ admits a \emph{Schmidt decomposition}: $\ket\psi = \sum_i\lambda_i\ket{v_i}\ket{w_i}$ where $\lambda_i>0$ for all $i$, and $\{\ket{v_i}\}_i$ and $\{\ket{w_i}\}_i$ are orthonormal sets in $\mathcal H_A$ and $\mathcal H_B$, respectively. We write $\Supp_A\ket\psi$ for the subspace $\overline{\Span_i\{\ket{v_i}\}} \subseteq \mathcal H_A$, and similarly $\Supp_B\ket\psi := \overline{\Span_i\{\ket{w_i}\}} \subseteq \mathcal H_B$. We will also use $\Supp\ket\psi := \overline{\Span_i\{\ket{v_i}\ket{w_i}\}} \subseteq \mathcal{H}_A\otimes \mathcal{H}_B$. We say that $\ket{\psi}$ is of \emph{full Schmidt rank} if $\Supp_A\ket{\psi} = \mathcal H_A$ and $\Supp_B\ket{\psi} = \mathcal H_B$. For a natural number $k\geq 2$, we will write
\begin{equation*}
\ket{\psi_k} := \frac1{\sqrt k}\sum_{i=0}^{k-1} \ket{ii}
\end{equation*}
for the \emph{maximally entangled state} of local dimension $k$. By an \emph{observable}, we mean a self-adjoint unitary operator. Given an observable $A\in \mathcal{B}(\mathcal{H})$ we have the spectral decomposition $A = A^+ - A^-$, where the projection onto the $+1$-eigenspace (resp., $-1$-eigenspace) of $A$ is denoted by $A^+$ (resp., $A^-$). The commutator of two operators $A,B\in \mathcal{B}(\mathcal{H})$ is denoted by $[A,B] := AB-BA$. Let $\mathbb{Z}/d\mathbb{Z} := \{0,1,\dots,d-1\}$ denote the ring of integers modulo $d$ for a natural number $d$.

Suppose $\mathcal H$ has $n$ mutually orthogonal subspaces $\mathcal H_1, \dots, \mathcal H_n$. If for every $\ket v \in \mathcal H$ there exists a unique $\ket{v_k} \in \mathcal H_k$ such that $\ket v = \sum_{k=1}^n\ket{v_k}$, then $\mathcal H$ is a direct sum of the subspaces, and we write $\mathcal H = \bigdsum_{k=1}^n\mathcal H_k$. Unless stated otherwise, any direct sum of Hilbert spaces appearing in this paper will be an internal direct sum. Furthermore, suppose $A_k : \mathcal H_k \to \mathcal K_k$ is a bounded operator for each $k \in [n]$, and let $\mathcal K = \bigdsum_{k=1}^n \mathcal K_k$ be the external direct sum. By writing $\bigdsum_{k=1}^nA_k$, we mean the operator $A : \mathcal H \to \mathcal K$ acting on a vector $\ket v = \sum_{k=1}^n\ket{v_k}$, where $\ket{v_k} \in \mathcal H_k$ for each $k$, by
\begin{equation}
\label{eq:dsum-of-ops}
\left(\bigdsum_{k=1}^nA_k\right)\ket v = \bigdsum_{k=1}^nA_k\ket{v_k} = \left(A_1\ket{v_1}, A_2\ket{v_2}, \dots, A_n\ket{v_n}\right).
\end{equation}

We will often identify spaces consisting of tuples of vectors with other isomorphic Hilbert spaces, e.g., $\mathbb C^2 \times \mathbb C^2 \cong \mathbb C^4$.

\section{Preliminaries}\label{sec:preliminaries}

\begin{definition}[Nonlocal game]\label{def:nonlocal-game}
A \emph{nonlocal game} is a tuple $G = (X, Y, A, B, V, \pi)$, where $X, Y, A$ and $B$ are nonempty finite sets corresponding to the possible questions and answers for each prover, $V : X\times Y \times A \times B \to \{0,1\}$ is a predicate which we refer to as the \emph{verification function}, and $\pi$ is a probability distribution over $X \times Y$.
\end{definition}

A nonlocal game $G = (X, Y, A, B, V, \pi)$ is played by a referee and two provers (which we will denote by Alice and Bob). It proceeds by the referee choosing at random some pair of questions $(x,y) \in X \times Y$ according to the probability distribution $\pi$ and then sending $x$ to Alice and $y$ to Bob. Alice and Bob, who are not allowed to communicate in any way after they receive the questions, must then each choose an answer to send back, and upon receiving $a\in A$ from Alice and $b\in B$ from Bob, the referee evaluates $V(x,y,a,b)$ to determine whether the provers win. The provers win if $V(x,y,a,b) = 1$, otherwise they lose

We are interested in a particular subclass of nonlocal games, namely those based upon a system of linear constraints, and aptly called linear constraint system games. Such games have been studied in depth \cite{cm-characterisation}. Suppose we have a system of $n$ linear equations over $\mathbb Z/d\mathbb Z$ in $k$ variables. A natural question in this regard is whether the given system of equations is satisfiable, and this is easy to determine classically. This can also be done by constructing a nonlocal game where Alice is given an equation and has to fill in all the variables, while Bob is given a single variable which he is asked to fill in. The provers then win if and only if Alice's assignment satisfies the equation given to her and Bob's assignment to his variable is consistent with Alice's assignment. Formally, a linear constraint system game is defined as follows.

\begin{definition}[Linear constraint system game]
Let $n, k, d \in \mathbb N$, and consider a system of $n$ linear equations in $k$ variables $e_1,\dots,e_k$ over $\mathbb{Z}/d\mathbb{Z}$ given by
\begin{align}
\label{eq:sys-eq-1}
\sum_{j=1}^k\alpha_{i,j}e_j = \beta_i, \quad i\in [n],
\end{align}
where $\alpha_{i,j}, \beta_i \in \mathbb Z / d\mathbb Z$ for all $i \in [n]$ and $j \in [k]$. The \emph{linear constraint system game} (LCS game) corresponding to the system of linear equations in \eqref{eq:sys-eq-1} is a nonlocal game with question sets $X = [n],\, Y = [k],$ and answer sets $A = (\mathbb Z / d\mathbb Z)^k,\,  B = \mathbb Z / d\mathbb Z$. The probability distribution $\pi$ is the uniform distribution over the set $\{(x, y) \in X \times Y \mid \alpha_{x,y} \neq 0\}$, and the verification function is
\begin{equation*}
V(x,y,a,b) = \begin{cases}
1 & \text{if }\sum_{j=1}^k \alpha_{x,j}a_j = \beta_x \;\text{and}\; a_y = b, \\
0 & \text{otherwise}.
\end{cases}
\end{equation*}
Furthermore, if $d=2$, then the game is called a \emph{binary linear constraint system game}.
\end{definition}

Quantum strategies for nonlocal games are usually described by a state shared by Alice and Bob, and sets of measurement operators for both players which may be assumed to be projective. Equivalently, quantum strategies can be described in terms of unitaries which for our purposes will be more convenient. Furthermore, all the LCS games we consider are over $\mathbb Z/2\mathbb Z$, so observables will be sufficient.

\begin{definition}[Quantum strategy of observables]\label{def:quantum-strategy-observables}
Suppose $G = (X, Y, A, B, V, \pi)$ is a binary LCS game. A \emph{quantum strategy of observables} (or simply a \emph{quantum strategy}) is a 3-tuple:
\begin{align}\label{eq:quantum-strategy}
\mathcal S = \left(\ket\psi \in \mathcal H_A\tensor \mathcal H_B, \left\{A_i^{(x)}:i \in A\right\}_{x\in X},\left\{B_j\right\}_{j\in B}\right),
\end{align} where $\mathcal H_A$ and $\mathcal H_B$ are (possibly infinite dimensional) Hilbert spaces, $A_i^{(x)} \in \mathcal B(\mathcal H_A)$ and $B_j \in \mathcal B(\mathcal H_B)$ are observables, and $|\psi\rangle$ is a state. For question $x \in X$, Alice successively uses $A_1^{(x)}, \dots, A_k^{(x)}$ to assign values to each variable, while Bob only uses $B_j$ to determine his answer. Note that we associate projection onto the $+1$-eigenspace (resp., $-1$-eigenspace) with answer 0 (resp., answer 1).
\end{definition}

We will be using the term “perfect strategy” for any strategy which wins a nonlocal game with probability 1. Furthermore, we will be using the term “pseudo-telepathic nonlocal game” to refer to a nonlocal game with a perfect quantum strategy but no perfect classical strategies. Observe that it does not refer to the games which do not have any perfect quantum strategy but a limit of quantum strategies wins the game with probability 1. 

Observe that for a given quantum strategy as in Expression \eqref{eq:quantum-strategy}, it is not necessarily the case that Alice uses the same observables to assign values to the same variable within different equations she may be asked. That is, $A^{(x)}_i \neq A^{(x^{\prime})}_i$ may be possible for $x\neq x^{\prime}$. However, as shown in \cite{cm-characterisation}, if a quantum strategy is perfect for some binary LCS game, then one can show that equality holds in the previous inequality. Moreover, we recall the following theorem. 

\begin{theorem}\cite{cm-characterisation}\label{thm:operator-sol}
Let $G = (X, Y, A, B, V, \pi)$ be a binary LCS game and let $\mathcal{S}$ as given in Expression \eqref{eq:quantum-strategy} be a perfect quantum strategy for the game. If $\ket\psi$ is of full Schmidt rank, then \begin{enumerate}
\item $A_i^{(x)} = A_i^{(x')}$ for all $x, x'\in X$. By this token, we set $A_i := A_i^{(x)} = A_i^{(x')}$ for all $i\in Y$.
\item If variables $e_i$ and $e_j$ appear in the same equation, then $A_i$ and $A_j$ (resp., $B_i$ and $B_j$) commute.
\item The observables $\{A_i\}_{i\in Y}$ (resp., $\{B_j\}_{j\in Y}$) satisfy the linear equations when written in multiplicative form: $e_1^{\alpha_{x,1}}\dots e_k^{\alpha_{x,k}} = (-1)^{b_x}$ for all $x\in X$.
\end{enumerate}
\end{theorem}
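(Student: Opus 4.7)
The plan is to convert the winning condition into operator identities on the state $\ket{\psi}$ and then use full Schmidt rank to upgrade state-level identities into identities on the full Hilbert space. The first tool I would set up is a \emph{fundamental identity}: for any question pair $(x,y)$ with $\alpha_{x,y} \neq 0$, perfect agreement $a_y = b$ forces $\bra{\psi} A_y^{(x)} \tensor B_y \ket{\psi} = 1$. Since $A_y^{(x)} \tensor B_y$ is an observable with spectrum in $\{-1,+1\}$, the state $\ket{\psi}$ must lie in its $+1$-eigenspace, and multiplication by $I \tensor B_y$ yields $(A_y^{(x)} \tensor I)\ket{\psi} = (I \tensor B_y)\ket{\psi}$. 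Expanding in the Schmidt basis $\ket{\psi} = \sum_i \lambda_i \ket{v_i}\ket{w_i}$ shows that the map $T \mapsto (T \tensor I)\ket{\psi}$ is injective on $\mathcal{B}(\mathcal H_A)$ (and symmetrically on the $B$-side), so any state-level identity of the form $(T \tensor I)\ket{\psi} = 0$ promotes to $T = 0$. Part (1) is then immediate: for two questions $x, x'$ both containing $y$, two applications of the fundamental identity give $((A_y^{(x)} - A_y^{(x')}) \tensor I)\ket{\psi} = 0$, hence $A_y^{(x)} = A_y^{(x')}$ on $\mathcal H_A$; variables not appearing in a given equation may be reassigned to match without affecting any winning probability.

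Part (2) is where the real work lies, since commutation is not manifest from the fundamental identity alone. My approach is to analyse the joint experiment in which Alice sequentially measures $A_{y'}^{(x)}$ then $A_y^{(x)}$ while Bob measures $B_y$; writing $A_y^\pm, A_{y'}^\pm, B_y^\pm$ for the spectral projections of the corresponding observables, the joint outcome probability equals $\|A_y^{a_y} A_{y'}^{a_{y'}} \tensor B_y^{b} \ket{\psi}\|^2$. Perfectness forces $a_y = b$ almost surely, so this expression vanishes whenever $b \neq a_y$, yielding $(A_y^{a_y} A_{y'}^{a_{y'}} \tensor B_y^{-a_y})\ket{\psi} = 0$ for every choice of signs. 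Taking the signed combination over $a_{y'} \in \{+,-\}$ so that $A_{y'}^{+} - A_{y'}^{-} = A_{y'}^{(x)}$, and then using the spectral-projection consequence $(I \tensor B_y^{-a_y})\ket{\psi} = (A_y^{-a_y} \tensor I)\ket{\psi}$ of the fundamental identity to move the right-hand factor back to Alice's side, produces $(A_y^{a_y} A_{y'}^{(x)} A_y^{-a_y} \tensor I)\ket{\psi} = 0$, whence $A_y^{a_y} A_{y'}^{(x)} A_y^{-a_y} = 0$ on $\mathcal H_A$ by injectivity. Reassembling $A_{y'}^{(x)} = A_y^{+} A_{y'}^{(x)} A_y^{+} + A_y^{-} A_{y'}^{(x)} A_y^{-}$ shows $A_{y'}^{(x)}$ preserves the $\pm 1$-eigenspaces of $A_y^{(x)}$, which is exactly $[A_y^{(x)}, A_{y'}^{(x)}] = 0$. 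The analogous commutation on Bob's side then comes for free: the fundamental identity transfers the commutator as $([A_y, A_{y'}] \tensor I)\ket{\psi} = -(I \tensor [B_y, B_{y'}])\ket{\psi}$, so the vanishing of the left side combined with $B$-side injectivity forces $[B_y, B_{y'}] = 0$ on $\mathcal H_B$.

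With parts (1) and (2) settled, Part (3) follows cleanly. The observables $\{A_j : \alpha_{x,j} = 1\}$ now pairwise commute and square to $I$, so their product $W_x := \prod_{j : \alpha_{x,j} = 1} A_j$ is a well-defined self-adjoint unitary. The constraint that Alice's jointly measured outcomes always satisfy $\sum_j \alpha_{x,j} a_j = \beta_x$ translates, via simultaneous diagonalisation on the support of $\ket{\psi}$, into $(W_x \tensor I)\ket{\psi} = (-1)^{\beta_x}\ket{\psi}$, which lifts to $W_x = (-1)^{\beta_x} I$ on $\mathcal H_A$ by full Schmidt rank; the analogous Bob-side identity follows by iterating the fundamental identity across the equation and invoking commutation to ignore the ordering. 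The main obstacle throughout is Part (2) --- identifying the three-point correlator that isolates the non-commuting part of the sequential measurement and using the fundamental identity at just the right moment to kick it back to an $\mathcal H_A$-only identity ready for injectivity; everything else is a mechanical application of the same two tools.
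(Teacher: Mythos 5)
The paper only \emph{cites} this theorem from \cite{cm-characterisation} and does not reproduce a proof, so there is no in-paper argument to compare against; the following is an assessment of your proposal on its own terms. Your overall strategy is the standard one: establish the consistency identity $(A_y^{(x)}\tensor B_y)\ket\psi=\ket\psi$, use the injectivity of $T\mapsto (T\tensor I)\ket\psi$ afforded by full Schmidt rank to promote state-level identities to operator identities, extract commutation via a signed combination of spectral projections, and finish part (3) by multiplicativity. Parts (1) and (3) go through once (2) is in hand, and your derivation of $[B_y,B_{y'}]=0$ from $[A_y,A_{y'}]=0$ via the consistency identity is clean.

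However, the part (2) argument has two genuine gaps. First, the asserted ``joint outcome probability'' $\norm{(A_y^{a_y}A_{y'}^{a_{y'}}\tensor B_y^b)\ket\psi}^2$ is not what the strategy actually produces: per Definition \ref{def:quantum-strategy-observables}, Alice measures \emph{all} of $A_1^{(x)},\dots,A_k^{(x)}$ in sequence, so the correct joint probability is $\norm{(A_k^{a_k}\cdots A_1^{a_1}\tensor B_y^b)\ket\psi}^2$. To reach the vanishing $(A_y^{a_y}A_{y'}^{a_{y'}}\tensor B_y^{-a_y})\ket\psi=0$ you must sum the (zero) vectors $A_k^{a_k}\cdots A_1^{a_1}\tensor B_y^{-a_y}\ket\psi$ over the remaining indices and collapse the unwanted projections using $\sum_{a_j}A_j^{a_j}=I$; this works, but it is a step you never state, and the same omission already underlies the unargued claim $\squeeze\psi{(A_y^{(x)}\tensor B_y)}=1$. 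Second, you fix the order ``Alice measures $A_{y'}^{(x)}$ then $A_y^{(x)}$'' and pair this with Bob measuring $B_y$; that matches the prescribed sequence only when $y'<y$. If $y<y'$, the vanishing you can actually derive is $(A_{y'}^{a_{y'}}A_y^{a_y}\tensor B_y^{-a_y})\ket\psi=0$, and your signed combination over $a_{y'}$ followed by moving $B_y^{-a_y}$ to Alice's side yields $A_{y'}A_y^{a_y}A_y^{-a_y}=0$, which is vacuous. The repair is to let Bob measure $B_{y'}$ instead and take the signed combination over $a_y$, producing $A_{y'}^{a_{y'}}A_yA_{y'}^{-a_{y'}}=0$ and hence the same commutation; as written, the proof silently assumes one ordering and does not close the other case.
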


As noted in \cite{cm-characterisation}, a system of linear constraints is satisfiable if and only if the corresponding nonlocal game can be won perfectly with some classical strategy. Therefore, insatisfiable systems of linear constraints will be of particular importance, as they may have a strictly better quantum strategy. It turns out that some of these even have perfect quantum strategies, a canonical example of which is the Magic Square game.

\begin{definition}[Magic Square game]
The \emph{Magic Square game} is the LCS game associated with the following system of equations over $\mathbb Z/2\mathbb Z$:
\begin{equation}
\label{eq:glued-magic-square-equations}
\begin{array}{ccccc}
e_1+e_2+e_3 = 0 && e_4+e_5+e_6 = 0 && e_7+e_8+e_9 = 0\\
e_1+e_4+e_7 = 0 && e_2+e_5+e_8 = 0 && e_3+e_6+e_9 = 1\\
\end{array}
\end{equation}
\end{definition}

It is also possible to visualize the system of equations in \eqref{eq:glued-magic-square-equations} as a graph -- Figure \ref{subfig:magic-square-game} -- from which the name derives. 

The Magic Pentagram game, which shares many properties with the Magic Square game, can be described by Figure \ref{subfig:magic-pentagram-game}, from which the relevant equations (over $\mathbb Z/2\mathbb Z$) can be readily derived.

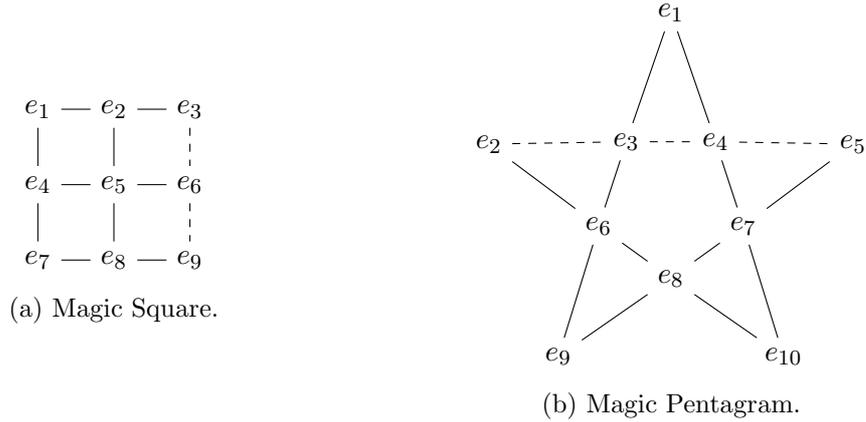
\begin{figure}
\centering
\begin{subfigure}{0.45\textwidth}
\centering
\begin{tikzpicture}
\node (e1) {$e_1$}; \node[right of=e1](e2) {$e_2$}; \node[right of=e2](e3) {$e_3$};
\node[below of=e1] (e4) {$e_4$}; \node[right of=e4](e5) {$e_5$}; \node[right of=e5](e6) {$e_6$};
\node[below of=e4] (e7) {$e_7$}; \node[right of=e7](e8) {$e_8$}; \node[right of=e8](e9) {$e_9$};
\draw (e1) -- (e2); \draw (e2) -- (e3);
\draw (e4) -- (e5); \draw (e5) -- (e6);
\draw (e7) -- (e8); \draw (e8) -- (e9);
\draw (e1) -- (e4); \draw (e4) -- (e7);
\draw (e2) -- (e5); \draw (e5) -- (e8);
\draw[dashed] (e3) -- (e6); \draw[dashed] (e6) -- (e9);
\end{tikzpicture}
\caption{Magic Square.}
\label{subfig:magic-square-game}
\end{subfigure}
\begin{subfigure}{0.45\textwidth}
\centering
\usetikzlibrary{math}
\begin{tikzpicture}
\node (e8) at ({cos(deg(-5*pi/10))}, {sin(deg(-5*pi/10))}) {$e_8$};
\node (e7) at ({cos(deg(-1*pi/10))}, {sin(deg(-1*pi/10))}) {$e_7$};
\node (e4) at ({cos(deg(3*pi/10))}, {sin(deg(3*pi/10))}) {$e_4$};
\node (e3) at ({cos(deg(7*pi/10))}, {sin(deg(7*pi/10))}) {$e_3$};
\node (e6) at ({cos(deg(11*pi/10))}, {sin(deg(11*pi/10))}) {$e_6$};
\tikzmath{\r = sin(deg(3*pi/10)) + 5^(1/3);}
\node (e1) at ({\r*cos(deg(5*pi/10))}, {\r*sin(deg(5*pi/10))}) {$e_1$};
\node (e2) at ({\r*cos(deg(9*pi/10))}, {\r*sin(deg(9*pi/10))}) {$e_2$};
\node (e9) at ({\r*cos(deg(13*pi/10))}, {\r*sin(deg(13*pi/10))}) {$e_9$};
\node (e10) at ({\r*cos(deg(17*pi/10))}, {\r*sin(deg(17*pi/10))}) {$e_{10}$};
\node (e5) at ({\r*cos(deg(21*pi/10))}, {\r*sin(deg(21*pi/10))}) {$e_5$};

\draw (e1) -- (e3); \draw (e1) -- (e4);
\draw (e3) -- (e6); \draw (e4) -- (e7);
\draw (e6) -- (e9); \draw (e7) -- (e10);
\draw (e2) -- (e6); \draw (e5) -- (e7);
\draw (e6) -- (e8); \draw (e7) -- (e8);
\draw (e8) -- (e10); \draw (e8) -- (e9);
\draw[dashed] (e2) -- (e3);
\draw[dashed] (e3) -- (e4);
\draw[dashed] (e4) -- (e5);
\end{tikzpicture}
\caption{Magic Pentagram.}
\label{subfig:magic-pentagram-game}
\end{subfigure}
\caption{The two canonical examples of magic games. A straight line through several variables indicates that there is an equation where the variables should multiply to $+1$ if the line is solid, and $-1$ if it is dashed.}
\label{fig:magic-games}
\end{figure}

This now brings us to the main type of games we will be considering, namely \emph{glued games}, which are straightforward generalisations of the Glued Magic Square game considered in \cite{algebraic-chsh}.

\begin{definition}[Glued games]
Suppose $G$ is a binary LCS game in the variables $e_1, \dots, e_k$ and $H$ is a binary LCS game in the variables $e_{k+1}, \dots, e_{k+\ell}$. Suppose furthermore they each have exactly one equation in which the variables sum to  1. The \emph{glued game} corresponding to these two games is the binary LCS game in the variables $e_1, \dots, e_{k+\ell}$ whose equations are the union of all equations in $G$ and $H$, except for the two equations having variables sum up to 1 -- which we replace by a single equation by adding their left-hand sides and equating it to 1.
\end{definition}

The main example of such a game is the Glued Magic Square game which is the gluing of the Magic Square game with itself. The resulting game is illustrated in Figure \ref{fig:gms-visual}.
Gluing the Magic Square together with another Magic Square, or a Magic Pentagram, yields a pseudo-telepathy game, as one possible perfect quantum strategy is to use a perfect quantum strategy for the Magic Square game on the Magic Square part, and use the identity operators on the other part (for example, see Section 9 of \cite{algebraic-chsh}). Similarly, gluing together the Magic Pentagram with itself yields a pseudo-telepathy game.
\begin{figure}
\centering
\begin{tikzpicture}
\node (e1) {$e_1$}; \node[right of=e1](e2) {$e_2$}; \node[right of=e2](e3) {$e_3$};
\node[below of=e1] (e4) {$e_4$}; \node[right of=e4](e5) {$e_5$}; \node[right of=e5](e6) {$e_6$};
\node[below of=e4] (e7) {$e_7$}; \node[right of=e7](e8) {$e_8$}; \node[right of=e8](e9) {$e_9$};
\draw (e1) -- (e2); \draw (e2) -- (e3);
\draw (e4) -- (e5); \draw (e5) -- (e6);
\draw (e7) -- (e8); \draw (e8) -- (e9);
\draw (e1) -- (e4); \draw (e4) -- (e7);
\draw (e2) -- (e5); \draw (e5) -- (e8);
\draw[dashed] (e3) -- (e6); \draw[dashed] (e6) -- (e9);
\node[below of=e9] (e10) {$e_{10}$}; \node[right of=e10] (e11) {$e_{11}$}; \node[right of=e11] (e12) {$e_{12}$};
\node[below of=e10] (e13) {$e_{13}$}; \node[right of=e13] (e14) {$e_{14}$}; \node[right of=e14] (e15) {$e_{15}$};
\node[below of=e13] (e16) {$e_{16}$}; \node[right of=e16] (e17) {$e_{17}$}; \node[right of=e17] (e18) {$e_{18}$};
\draw[dashed] (e9) -- (e10);
\draw[dashed] (e10) -- (e13); \draw[dashed] (e13) -- (e16);
\draw (e11) -- (e14); \draw (e14) -- (e17);
\draw (e12) -- (e15); \draw (e15) -- (e18);
\draw (e10) -- (e11); \draw (e11) -- (e12);
\draw (e13) -- (e14); \draw (e14) -- (e15);
\draw (e16) -- (e17); \draw (e17) -- (e18);
\end{tikzpicture}
\caption{A visualisation of the Glued Magic Square game, a LCS game with 18 variables. In rows and columns, a solid line indicates that the variables of that row or column multiply to $+1$, while a dashed line indicates that the product instead should be $-1$.}
\label{fig:gms-visual}
\end{figure}
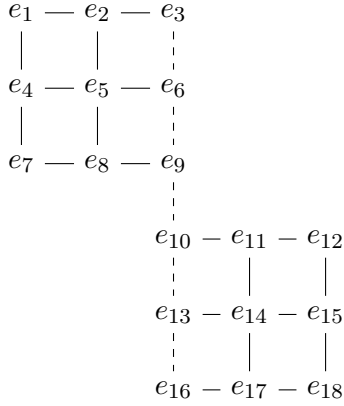

We now formally introduce the notion of self-testing. We recall the definition of a local dilation from \cite{MPS}.

\begin{definition}[Local dilation]
\label{def:local-dilation}
\sloppy{}Suppose $\mathcal S = \left(\ket\psi \in \mathcal H_A \tensor \mathcal H_B, \{A_i\}_i, \{B_j\}_j\right)$ and $\tilde{\mathcal S} = \left(\ket{\tilde\psi} \in \tilde{\mathcal H}_A \tensor \tilde{\mathcal H}_B, \{\tilde A_i\}_i, \{\tilde B_j\}_j\right)$ are two quantum strategies having the same number of observables for each party. We say that $\tilde{\mathcal S}$ is \emph{a local dilation of $\mathcal S$} if there exist Hilbert spaces $\mathcal H_{A,\aux}$ and $\mathcal H_{B,\aux}$, a state $\ket\aux \in \mathcal H_{A,\aux} \tensor \mathcal H_{B,\aux}$ and isometries $U_A : \mathcal H_A \to \tilde{\mathcal H}_A \tensor \mathcal H_{A,\aux}$ and $U_B : \mathcal H_A \to \tilde{\mathcal H}_B \tensor \mathcal H_{B, aux}$ such that with $U := U_A\tensor U_B$ it holds that for all $i$ and $j$,
\begin{align*}
U \ket\psi &= \ket{\tilde\psi}\tensor\ket\aux, \numberhere\label{eq:local-dilation-state-def}\\
U\left(A_i\tensor I\right) \ket\psi &= \left(\tilde{A_i}\tensor I\right)\ket{\tilde\psi}\tensor\ket\aux, \\
U\left(I\tensor B_j\right) \ket\psi &= \left(I\tensor\tilde{B_j}\right)\ket{\tilde\psi}\tensor\ket\aux. \numberhere\label{eq:local-dilation-measurement-def}
\end{align*}
\end{definition}

\begin{definition}[Self-testing]\label{def:usual-self-test}
A nonlocal game $G=(X, Y, A, B, V, \pi)$ is a \emph{self-test} for a quantum strategy (which we term \emph{ideal} or \emph{canonical}) $\tilde{\mathcal{S}}$, if for any quantum strategy $\mathcal S$ achieving the quantum value of $G$, the strategy $\tilde{\mathcal{S}}$ is a local dilation of $\mathcal{S}$. If only Equation \eqref{eq:local-dilation-state-def} is satisfied, we say that the game $G$ self-tests the state.
\end{definition}

We remark that in some cases it is also possible to show that the self-testing is \emph{robust}, i.e., for all strategies close to optimal, the measurements and state used will also be close to an optimal one in a certain sense, though this is beyond the scope of the main text of this paper.

It turns out that self-testing holds for both the Magic Square and the Magic Pentagram games. We will state a very simplified version of the results -- it should be noted that both games in fact \emph{robustly} self-test their ideal quantum strategies.

\begin{theorem}[Self-testing of Magic Square, \cite{wbms}]
\label{thm:magic-square-selftest}
The Magic Square game self-tests its ideal strategy $\mathcal S_{\text{MS}}$ which consists of nine observables for each party along with the state $\ket{\psi_4} = \frac12\sum_{i=0}^3 \ket{ii}$.
\end{theorem}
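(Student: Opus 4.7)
The plan is to follow the standard self-testing strategy for LCS games adapted to the algebraic structure of the Magic Square. Let $\mathcal S = (\ket\psi, \{A_i^{(x)}\}, \{B_j\})$ be any perfect quantum strategy. First I would restrict to $\Supp_A\ket\psi \tensor \Supp_B\ket\psi$ so that $\ket\psi$ has full Schmidt rank; this does not change the observed statistics and permits invoking Theorem \ref{thm:operator-sol}. By that theorem, Alice's observables are well-defined as $\{A_1,\dots,A_9\}$, satisfy $\comm{A_i, A_j}=0$ whenever $e_i$ and $e_j$ share a row or column of \eqref{eq:glued-magic-square-equations}, and obey the multiplicative identities extracted from the six equations: all row and column products equal $I$ except $A_3 A_6 A_9 = -I$. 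The same holds for Bob's observables.

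The key algebraic step is to extract anticommutation from the sign in $A_3 A_6 A_9 = -I$. Using the row/column identities one writes $A_3 = A_1 A_2$, $A_6 = A_4 A_5$, and $A_9 = A_7 A_8 = (A_1 A_4)(A_2 A_5)$, substitutes, and pushes commuting pairs together; the known row/column commutations collapse the product down to a single factor measuring the failure of one target pair to commute, and the sign on the right-hand side then forces that pair to anticommute. Carrying this out with two different choices of factor orderings yields both $\anticomm{A_1, A_5} = 0$ and $\anticomm{A_2, A_4} = 0$. No other pair within $\{A_1, A_2, A_4, A_5\}$ acquires a nontrivial relation, and the same conclusions apply verbatim on Bob's side.

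The quadruple $\{A_1, A_2, A_4, A_5\}$ thus satisfies exactly the commutation/anticommutation relations of the canonical generators $\{\pauli{X}\tensor I,\; \pauli{Z}\tensor I,\; I\tensor \pauli{Z},\; I\tensor \pauli{X}\}$ of the two-qubit Pauli group. Standard arguments (an iterated Jordan's lemma, or the representation theory of that group) then yield an isometry $U_A : \mathcal H_A \to \mathbb C^4 \tensor \mathcal H_{A,\aux}$ under which each of these four operators takes the canonical form $P \tensor I$ with $P$ the corresponding Pauli tensor. The remaining five $A_i$'s are already expressible as products of the four generators, so they inherit canonical forms automatically. Applying the same construction on Bob's side produces $U_B : \mathcal H_B \to \mathbb C^4 \tensor \mathcal H_{B,\aux}$.

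It remains to show $(U_A \tensor U_B)\ket\psi = \ket{\psi_4} \tensor \ket\aux$ for some auxiliary state. Perfect play forces consistency between Alice's and Bob's answers on each shared variable, which translates into correlation identities $\bra\psi A_i \tensor B_i \ket\psi = 1$ for each $i$ (after an appropriate transpose convention relative to the Schmidt basis of $\ket\psi$). Together with the canonical forms, these identities pin the $\mathbb C^4 \tensor \mathbb C^4$ factor of the transformed state down to $\ket{\psi_4}$. The main obstacle I anticipate is precisely this last block-matching step: one must show that the multiplicities of the Pauli blocks on Alice's and Bob's sides agree, and that the full-Schmidt-rank assumption rules out spurious phase or sign discrepancies between blocks, so that a single auxiliary state $\ket\aux \in \mathcal H_{A,\aux} \tensor \mathcal H_{B,\aux}$ suffices globally rather than block-by-block.
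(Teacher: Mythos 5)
This theorem is stated in the paper as a citation to \cite{wbms}; the paper does not supply its own proof, so there is no in-paper argument to compare against. Your sketch nonetheless reflects the standard route: pass to the full-Schmidt-rank support, invoke Theorem \ref{thm:operator-sol} to obtain operator solutions with the row/column commutations, extract $\anticomm{A_1,A_5}=\anticomm{A_2,A_4}=0$ from the single odd constraint by substituting the products $A_3=A_1A_2$, $A_6=A_4A_5$, $A_9=(A_1A_4)(A_2A_5)$ and cancelling (two different orderings of the expansion isolate the two signs, as you say), then recognise the two-qubit Pauli relations and apply representation theory to build the local isometries, with the consistency conditions $\squeeze{\psi}{A_i\tensor B_i}=1$ pinning down $\ket{\psi_4}$. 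Two small caveats. First, the correspondence you list between $\{A_1,A_2,A_4,A_5\}$ and $\{\pauli{X}\tensor I,\pauli{Z}\tensor I, I\tensor\pauli{Z}, I\tensor\pauli{X}\}$ has to be reindexed so that the anticommuting pairs $(A_1,A_5),(A_2,A_4)$ land on the anticommuting Pauli pairs; the relations match as unordered data but not in the order written. Second, the step you yourself flag as the main obstacle — showing that Alice's and Bob's block decompositions line up so that a single $\ket{\aux}$ works globally — is precisely the content that makes this a nontrivial theorem; a complete proof needs to use all nine consistency relations (not just the generators) together with the full-Schmidt-rank hypothesis to force the two sides' Pauli representations to be simultaneously diagonalisable against the Schmidt basis, as is done in \cite{wbms}. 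So the sketch is sound in outline but defers the genuinely hard part.
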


\begin{theorem}[Self-testing of Magic Pentagram, \cite{magic-pentagram-rigidity}]
\label{thm:magic-pentagram-selftest}
The Magic Pentagram game self-tests its ideal strategy $\mathcal S_{\text{MP}}$ which consists of ten observables for each party along with the state $\ket{\psi_8} = \frac1{\sqrt8}\sum_{i=0}^7 \ket{ii}$.
\end{theorem}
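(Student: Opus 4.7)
The plan is to follow the standard recipe for self-testing binary LCS games by extracting a Pauli-like representation from the operator solution guaranteed by Theorem \ref{thm:operator-sol}. Fix any perfect strategy $\mathcal{S} = (\ket\psi, \{A_i^{(x)}\}, \{B_j\})$. After restricting $\ket\psi$ to $\Supp_A\ket\psi \tensor \Supp_B\ket\psi$, we may assume $\ket\psi$ has full Schmidt rank, so by Theorem \ref{thm:operator-sol} Alice's observables may be taken as a single family $\{A_i\}_{i=1}^{10}$ independent of the question, with $A_iA_j = A_jA_i$ whenever $e_i$ and $e_j$ lie on a common line, and satisfying the five line products in multiplicative form (four solid lines giving $+\identity$ and the single dashed line giving $-\identity$).

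The crucial step is to upgrade the commutation-along-lines to anticommutation between observables whose vertices share no line. For this I would use the standard magic-game manipulation: multiply pairs of line equations that share a single variable and then commute operators through the shared one to isolate an anticommutator $A_iA_j + A_jA_i$ acting on $\ket\psi$. In the pentagram each vertex lies on exactly two lines, and combining the $-\identity$ coming from the dashed line with the $+\identity$ from the solid lines yields genuine anticommutation on the support of $\ket\psi$ for suitable pairs. Iterating this derivation through the five inner vertices, and using the state-operator correspondence $(A_i\tensor\identity)\ket\psi = (\identity\tensor B_i^T)\ket\psi$ (which holds because Alice's and Bob's answers must agree on the queried variable), transfers the same relations to Bob's observables $\{B_j\}$.

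Once enough commutation/anticommutation relations are collected, the $A_i$ generate, modulo the kernel of $\ket\psi$, a representation of the Pauli group on three qubits; this is the rigidity content of the pentagram. I would then apply a Gowers--Hatami / swap-isometry construction: define $U_A : \mathcal H_A \to (\mathbb C^2)^{\tensor 3}\tensor \mathcal H_{A,\aux}$ from three pairs of mutually anticommuting observables among the $A_i$ playing the role of Pauli $X$ and $Z$ on each qubit, and analogously $U_B$ on Bob's side. Using the derived relations one verifies that $U_A\tensor U_B$ maps each $A_i$ (resp.\ $B_j$) acting on $\ket\psi$ to the corresponding canonical three-qubit Pauli string tensored with the identity on the auxiliary space, and that the state is mapped to $\ket{\psi_8}\tensor\ket\aux$, thereby satisfying Equations \eqref{eq:local-dilation-state-def}--\eqref{eq:local-dilation-measurement-def}.

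The main obstacle is the algebraic book-keeping in the anticommutation step: the pentagram is less symmetric than the square, so one must chase signs carefully through the unique dashed constraint to certify that enough pairs truly anticommute (and not merely satisfy some weaker relation) when acting on $\ket\psi$. Once those relations are in hand, the isometry construction and the verification of Definition \ref{def:local-dilation} proceed along essentially the same lines as the magic square analysis underlying Theorem \ref{thm:magic-square-selftest}, with three qubits replacing two.
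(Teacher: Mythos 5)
The paper does not prove Theorem~\ref{thm:magic-pentagram-selftest} at all: it is stated as an imported result from \cite{magic-pentagram-rigidity}, and no proof appears in the text, so there is no in-paper argument to compare against. That said, your sketch --- use Theorem~\ref{thm:operator-sol} to extract a single consistent, line-commuting family of observables on the support of $\ket\psi$, derive anticommutation for non-collinear pairs by chasing signs through the unique dashed constraint, recognize a three-qubit Pauli representation, and build a swap isometry to certify the local dilation of Definition~\ref{def:local-dilation} --- is the standard rigidity route for magic games and matches the approach of the cited reference and of the Magic Square analysis underlying Theorem~\ref{thm:magic-square-selftest}. One small imprecision: the correct consistency identity for a perfect strategy is $(A_i\tensor I)\ket\psi = (I\tensor B_i)\ket\psi$, obtained from $\squeeze\psi{(A_i\tensor B_i)}=1$ and unitarity; the transpose form $B_i^T$ you wrote only appears after fixing a Schmidt-basis identification, which is not yet available at that stage of the argument.
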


Later in the paper, we will be working with several different perfect quantum strategies for the same game, and therefore, to be able to completely characterise its perfect quantum strategies, we will need to discuss convex combinations of quantum strategies.

\begin{definition}[Convex combination of quantum strategies]
\label{def:strategy-linear-comb}
For each $k\in [n]$, let
\begin{align*}
\mathcal S_k = \left(\ket{\psi^{(k)}},\left\{A_i^{(k)}\right\}_i, \left\{B_j^{(k)}\right\}_j \right)
\end{align*}
be a quantum strategy for some nonlocal game $G$, and $\{\alpha_k\}_{k=1}^n \subseteq [0,1]$ a set of scalars satisfying $\sum_{k=1}^n\alpha_k^2 = 1$. The \emph{corresponding convex combination} is the quantum strategy
\begin{align*}
\mathcal S := \sum_{k=1}^n \alpha_k\mathcal S_k :=   \left(\ket\psi, \left\{A_i\right\}_i, \left\{B_j\right\}_j\right), 
\end{align*}
where the state and the observables are given by
\begin{align*}
\ket\psi = \bigdsum_{k=1}^n \alpha_k\ket{\psi^{(k)}}, && A_i = \bigdsum_{k=1}^n A_i^{(k)}, && B_j = \bigdsum_{k=1}^n B_j^{(k)}.
\end{align*}
Observe that the direct sums appearing here can be interpreted in two different ways. Either one can consider the external direct sum, where one forms a new Hilbert space consisting of ordered tuples of elements from each of the original spaces. In this case, we will refer to the convex combination as an \emph{external convex combination}. Another possible interpretation is that each of the strategies $\mathcal S_k$ lives in some subspace of some larger Hilbert space, and the direct sums are therefore the internal direct sum. In this case we refer to the convex combination as an \emph{internal convex combination}.
\end{definition}

Note that the direct sum of states appearing in the definition above is actually a direct sum of tensor product spaces, and for $\ket{\psi^{k}} \in \mathcal H_A^k \tensor \mathcal H_B^k$, the resulting state is thus $\ket\psi \in \bigdsum_{k,\ell=1}^n \left(\mathcal H_A^k\tensor \mathcal H_B^\ell\right)$, which is orthogonal to $\mathcal H_A^k\tensor \mathcal H_B^\ell$ for $k\neq \ell$ (in the sense that projecting the state onto each of these subspaces yields the zero vector). As an example with the external direct sum, it holds that $\frac{\sqrt2}{\sqrt5}\ket{\psi_2} \dsum \frac{\sqrt3}{\sqrt5}\ket{\psi_3} = \ket{\psi_5}$.

At first it might not be clear why we distinguish between internal and external direct sums, as they are isomorphic. However, internal direct sums have a desirable property that they are stable under unitaries, i.e., if one fixes some strategy for some game which decomposes into an internal direct sum, then applying a unitary to that strategy yields a strategy which again decomposes into an internal direct sum. However, that is not the case for external direct sums.

With the notion of a convex combination of quantum strategies in hand, we are ready to generalize the usual definition of self-testing given above by introducing the notion of convex self-testing where we allow multiple inequivalent optimal quantum strategies.

\begin{definition}[Convex self-testing]
\label{def:convex-self-test}
A nonlocal game $G=(X,Y,A,B,V,\pi)$ is a \emph{convex self-test} for the ideal quantum strategies $\tilde{\mathcal S}_1, \dots, \tilde{\mathcal S}_n$, if for any quantum strategy $\mathcal S$ achieving the quantum value of $G$, there exist coefficients $\{\alpha_k\}_{k=1}^n \subseteq [0, 1]$ and a decomposition of $\mathcal S$ into an internal convex combination $\mathcal S = \sum_{k=1}^n \alpha_k\mathcal S_k$ such that each $\tilde{\mathcal S}_k$ is a local dilation of $\mathcal S_k$.
\end{definition}

At first glance, while this appears to be quite different from the usual definition of self-testing, that is not the case, as we can rewrite it with more familiar conditions.

\begin{lemma}
Let $G$ be a nonlocal game and let $\tilde{\mathcal S}_k = \left(\ket{\tilde\psi^{(k)}}\in \tilde{\mathcal H}_A^k \tensor \tilde{\mathcal H}_B^k, \{\tilde A_i^{(k)}\}_i, \{\tilde B_j^{(k)}\}_j\right)$ be optimal quantum strategies for the game for each $k\in [n]$. Then the following are equivalent.
\begin{enumerate}
\item\label{enu:convex-equivalence-1} The game $G$ is a convex self-test for $\tilde{\mathcal S}_1, \dots, \tilde{\mathcal S}_n$.
\item\label{enu:convex-equivalence-2} Let $\mathcal S = \left(\ket\psi \in \mathcal H_A \tensor \mathcal H_B, \{A_i\}_i, \{B_j\}_j\right)$ be an optimal quantum strategy of $G$. Then for each $k\in [n]$, there exist Hilbert spaces $\mathcal H_{A,\aux}^k, \mathcal H_{B,\aux}^k$, states $\ket{aux_k}\in \mathcal H_{A,\aux}^k\otimes \mathcal H_{B,\aux}^k$, and isometries $U_A : \mathcal H_A \to \bigdsum_{k=1}^n \left(\tilde{\mathcal H}_A^k \tensor \mathcal H_{A,\aux}^k\right)$, $U_B : \mathcal H_B \to \bigdsum_{k=1}^n \left(\tilde{\mathcal H}_B^k \tensor \mathcal H_{B,\aux}^k\right)$ such that defining $U := U_A \tensor U_B$, it holds that:
\begin{align*}
U \ket\psi &= \bigdsum_{k=1}^n\left(\alpha_k\ket{\tilde\psi^{(k)}}\tensor\ket{\aux_k}\right),\numberhere\label{eq:conv-self-test-state}\\
U\left(A_i\tensor I\right) \ket\psi &= \bigdsum_{k=1}^n\left(\alpha_k\left(\tilde{A_i}\tensor I\right)\ket{\tilde\psi^{(k)}}\tensor\ket{\aux_k}\right)\numberhere\label{eq:conv-self-test-measurement-alice}\\
U\left(I\tensor B_j\right) \ket\psi &= \bigdsum_{k=1}^n\left(\alpha_k\left(I\tensor\tilde{B_j}\right)\ket{\tilde\psi^{(k)}}\tensor\ket{\aux_k}\right)\numberhere\label{eq:conv-self-test-measurement-bob}
\end{align*}
\end{enumerate}
\end{lemma}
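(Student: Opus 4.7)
The plan is to prove the two implications separately.

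The direction $(\ref{enu:convex-equivalence-1}) \Rightarrow (\ref{enu:convex-equivalence-2})$ is essentially bookkeeping. Given the decomposition $\mathcal{S} = \sum_{k=1}^n \alpha_k \mathcal{S}_k$ as an internal convex combination, we have $\mathcal H_A = \bigdsum_k \mathcal H_A^k$ and $\mathcal H_B = \bigdsum_k \mathcal H_B^k$, with $\ket\psi = \sum_k \alpha_k \ket{\psi^{(k)}}$ where $\ket{\psi^{(k)}} \in \mathcal H_A^k \tensor \mathcal H_B^k$, and $A_i = \bigdsum_k A_i^{(k)}$, $B_j = \bigdsum_k B_j^{(k)}$. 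For each $k$, let $U_A^k : \mathcal H_A^k \to \tilde{\mathcal H}_A^k \tensor \mathcal H_{A,\aux}^k$, $U_B^k : \mathcal H_B^k \to \tilde{\mathcal H}_B^k \tensor \mathcal H_{B,\aux}^k$, and $\ket{\aux_k}$ be the data witnessing that $\tilde{\mathcal S}_k$ is a local dilation of $\mathcal S_k$. Setting $U_A := \bigdsum_k U_A^k$ and $U_B := \bigdsum_k U_B^k$ in the sense of \eqref{eq:dsum-of-ops}, each $\ket{\psi^{(k)}}$ has support inside $\mathcal H_A^k \tensor \mathcal H_B^k$, so $(U_A \tensor U_B)\ket{\psi^{(k)}} = (U_A^k \tensor U_B^k)\ket{\psi^{(k)}} = \ket{\tilde\psi^{(k)}}\tensor\ket{\aux_k}$, and analogously for the actions of $A_i$ and $B_j$ on $\ket{\psi^{(k)}}$. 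Summing over $k$ with weights $\alpha_k$ yields Equations \eqref{eq:conv-self-test-state}--\eqref{eq:conv-self-test-measurement-bob}.

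The direction $(\ref{enu:convex-equivalence-2}) \Rightarrow (\ref{enu:convex-equivalence-1})$ requires more care. Equation \eqref{eq:conv-self-test-state} says $U\ket\psi$ has no component in $(\tilde{\mathcal H}_A^k \tensor \mathcal H_{A,\aux}^k) \tensor (\tilde{\mathcal H}_B^\ell \tensor \mathcal H_{B,\aux}^\ell)$ for $k \neq \ell$. Passing to the reduced state on Alice's side, this forces $U_A(\Supp_A\ket\psi)$ to decompose as an internal direct sum of subspaces $W_A^k \subseteq \tilde{\mathcal H}_A^k \tensor \mathcal H_{A,\aux}^k$. Defining $\mathcal H_A^k := U_A^{-1}(W_A^k)$ produces pairwise orthogonal subspaces of $\mathcal H_A$ whose direct sum equals $\Supp_A\ket\psi$, and analogously one defines $\mathcal H_B^k \subseteq \mathcal H_B$. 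Letting $\alpha_k \ket{\psi^{(k)}}$ denote the component of $\ket\psi$ in $\mathcal H_A^k \tensor \mathcal H_B^k$ recovers the state decomposition.

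It then remains to check that $A_i = \bigdsum_k A_i^{(k)}$ and similarly for $B_j$. The measurement equation \eqref{eq:conv-self-test-measurement-alice} shows that $(U_A A_i \tensor U_B)\ket\psi$ also lies in the block-diagonal subspace, and the same reduced-state argument forces $A_i$ to map each vector of the form $(I \tensor \bra{\phi_B})\ket\psi \in \mathcal H_A^k$ back into $\mathcal H_A^k$; thus $A_i$ preserves the decomposition on $\Supp_A\ket\psi$, and similarly $B_j$ preserves it on $\Supp_B\ket\psi$. Absorbing $(\Supp_A\ket\psi)^\perp$ into $\mathcal H_A^1$ and $(\Supp_B\ket\psi)^\perp$ into $\mathcal H_B^1$, and setting $A_i^{(k)} := A_i|_{\mathcal H_A^k}$, $B_j^{(k)} := B_j|_{\mathcal H_B^k}$, gives the desired internal convex decomposition $\mathcal S = \sum_k \alpha_k \mathcal S_k$, with the restrictions $U_A|_{\mathcal H_A^k}$, $U_B|_{\mathcal H_B^k}$ witnessing that each $\tilde{\mathcal S}_k$ is a local dilation of $\mathcal S_k$.

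The main obstacle is ensuring that $A_i$ and $B_j$ preserve the decomposition on all of $\mathcal H_A$ and $\mathcal H_B$, not merely on the supports of $\ket\psi$. We address this by passing without loss of generality to the smallest common invariant subspaces of the $A_i$'s (resp., $B_j$'s) containing $\Supp_A\ket\psi$ (resp., $\Supp_B\ket\psi$); this reduction leaves the strategy's correlations unchanged and lets the block-diagonal structure imposed by the measurement equations propagate to the full Hilbert spaces.
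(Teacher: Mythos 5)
Your $(\ref{enu:convex-equivalence-1})\Rightarrow(\ref{enu:convex-equivalence-2})$ direction is identical to the paper's: form $U_A := \bigoplus_k U_A^{(k)}$, $U_B := \bigoplus_k U_B^{(k)}$, use that each $\ket{\psi^{(k)}}$ lies in $\mathcal H_A^k\otimes\mathcal H_B^k$, and compute. For $(\ref{enu:convex-equivalence-2})\Rightarrow(\ref{enu:convex-equivalence-1})$ you also use the paper's core idea: define $\mathcal H_A^k$ by pulling the $k$-th block back through $U_A$ -- the paper's $\Supp_A\bigl(U_A^*(\ket{\tilde\psi^{(k)}}\otimes\ket{\aux_k})\bigr)$ and your $U_A^{-1}(W_A^k)$ are the same subspace. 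You do go further than the paper in explicitly attempting to verify that the observables are block diagonal; the paper's proof stops once orthogonality of the $\mathcal H_A^k$ is established and simply asserts that Definition \ref{def:convex-self-test} follows.

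That extra step is where your argument has a gap. The measurement equation \eqref{eq:conv-self-test-measurement-alice} shows that $U_A A_i$ sends $\mathcal H_A^k$ into $\tilde{\mathcal H}_A^k\otimes\mathcal H_{A,\aux}^k$, i.e.\ that $A_i(\mathcal H_A^k)\perp\mathcal H_A^\ell$ for $\ell\neq k$. But since $\mathcal H_A^k\subseteq\Supp_A\ket\psi$ is only part of the preimage of the $k$-th block, the image $A_i(\mathcal H_A^k)$ can still leak into $(\Supp_A\ket\psi)^\perp$; so your claim that ``$A_i$ preserves the decomposition on $\Supp_A\ket\psi$'' is not actually established. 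Your closing suggestion of passing to the smallest common $A_i$-invariant subspace containing $\Supp_A\ket\psi$ does not repair this: inside that subspace one still needs a direct-sum decomposition into $n$ blocks, each containing one $\mathcal H_A^k$ and invariant under every $A_i$, and two different observables can force a single vector of $(\Supp_A\ket\psi)^\perp$ into two different blocks. To be fair, this difficulty is equally present in the paper's own proof (which never touches the observable decomposition at all), and in the paper's applications the strategies are consistent ($(A_i\otimes B_i)\ket\psi=\ket\psi$), which forces the observables to preserve $\Supp_A\ket\psi$ and makes the issue disappear; but the lemma as stated concerns arbitrary nonlocal games and arbitrary optimal strategies, so you should either impose consistency, restrict attention to the supports, or supply an argument for the missing block-diagonality.
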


\begin{proof}
First we assume that \eqref{enu:convex-equivalence-1} holds. Let $\mathcal S$ be an optimal quantum strategy for the game as in \eqref{enu:convex-equivalence-2}. By Definition \ref{def:convex-self-test}, there exist scalars $\{\alpha_k\}_{k=1}^n\subseteq [0,1]$ such that $\mathcal S = \sum_{k=1}^n \alpha_k\mathcal S_k$, where $\mathcal S_k = \left(\ket{\psi^{(k)}}\in {\mathcal H}_A^k \tensor {\mathcal H}_B^k, \{A_i^{(k)}\}_i, \{B_j^{(k)}\}_j\right)$, and that for each $k \in [n]$, $\tilde{\mathcal S}_k$ is a local dilation of $\mathcal S_k$. By Definition \ref{def:local-dilation}, there exist isometries $U_A^{(k)} : \mathcal H_A^k \to \tilde{\mathcal H}_A^k \tensor \mathcal H_{A,\aux}^k$ and $U_B^{(k)} : \mathcal H_B^k \to \tilde{\mathcal H}_B^k \tensor \mathcal H_{B,\aux}^k$ and an auxiliary state $\ket{\aux_k} \in \mathcal H_{A,\aux}^k \tensor \mathcal H_{B,\aux}^k$ such that the equations in Definition \ref{def:local-dilation} hold.

Define $U_A := \bigdsum_{k=1}^nU_A^{(k)}$ and similarly $U_B := \bigdsum_{k=1}^nU_B^{(k)}$. Observe that while these are defined as the direct sum of operators, this is done as described in Equation \eqref{eq:dsum-of-ops}, so assuming all of the operators are finite-dimensional, the matrix representations of $U_A$ and $U_B$ are not necessarily block-diagonal.

Now we find that
\begin{align*}
(U_A \tensor U_B)\ket\psi &= \left(\bigdsum_{\ell=1}^nU_A^{(\ell)}\right) \tensor \left(\bigdsum_{m=1}^nU_B^{(m)}\right)\left(\bigdsum_{k=1}^n\alpha_k\ket{\psi^{(k)}}\right)\\
&= \left(\bigdsum_{\ell,m=1}^n U_A^{(\ell)}\tensor U_B^{(m)}\right)\left(\bigdsum_{k=1}^n\alpha_k\ket{\psi^{(k)}}\right)\\
&= \bigdsum_{k=1}^n\left(\alpha_k\left(U_A^{(k)}\tensor U_B^{(k)}\right)\ket{\psi^{(k)}}\right)\\
&= \bigdsum_{k=1}^n\left(\alpha_k\ket{\tilde{\psi}^{(k)}}\tensor \ket{\aux_k}\right), \numberhere\label{eq:conv-self-test-state-eq}
\end{align*}
where we in the third equality used that $\ket{\psi_k} \in \mathcal H_A^k \tensor \mathcal H_B^k$ for all $k\in[n]$.

A similar computation reveals that
\begin{align*}
(U_A \tensor U_B)(A_i \tensor I)\ket\psi &= \bigdsum_{k=1}^n\left(\alpha_k\left(\tilde A_i^{(k)}\tensor I\right)\ket{\tilde{\psi}^{(k)}}\tensor \ket{\aux_k}\right), \numberhere\label{eq:conv-self-test-alice-op}\\
(U_A \tensor U_B)(I \tensor B_j)\ket\psi &= \bigdsum_{k=1}^n\left(\alpha_k\left(I \tensor \tilde B_j^{(k)}\right)\ket{\tilde{\psi}^{(k)}}\tensor \ket{\aux_k}\right).\numberhere\label{eq:conv-self-test-bob-op}
\end{align*}

While these equations are much closer to those usually found in self-testing statements, there is a crucial difference (apart from the direct sum): In the usual self-testing statements there is a single auxiliary state, but here one appears for each substrategy, and in general they can be different and thus  cannot be factored out from the expression above.

We now show the converse; namely that if \eqref{enu:convex-equivalence-2} holds 
(with the direct sums assumed to be internal direct sums), then so does \eqref{enu:convex-equivalence-1}. 
Note that we need only show that we can decompose the original strategy into $n$ orthogonal substrategies. 
For each $k\in[n]$ define $\mathcal H_A^k := \Supp_A\left(U_A^*\left(\ket{\tilde{\psi}^{(k)}} \tensor\ket{\aux_k}\right)\right)$ and similarly $\mathcal H_B^k := \Supp_B\left(U_B^*\left(\ket{\tilde{\psi}^{(k)}} \tensor\ket{\aux_k}\right)\right)$. Observe that we need only show that $\mathcal H_A^k \perp \mathcal H_A^\ell$ for all $k \neq \ell$ to show that $\mathcal H_A = \bigdsum_{k=1}^n \mathcal H_A^k$, and from that it follows that Definition \ref{def:convex-self-test} holds.

Note that $U_A$ is an isometry and obviously $\ket{\tilde{\psi}^{(k)}}\tensor \ket{\aux_k} \in \Ran(U_A) \tensor \tilde{\mathcal H}_B \tensor \mathcal H_{B,\aux}$ for all $k \in [n]$. As $U_A^*$ is an isometry on $\Ran(U_A)$, and $\left\{\ket{\tilde{\psi}^{(k)}}\tensor \ket{\aux_k}\right\}_{k=1}^n$ are all mutually orthogonal, this implies that applying $U_A^*\tensor I$ to each of these yield $n$ orthogonal vectors, and furthermore it holds that $\ket\psi = \sum_{k=1}^n (U_A^*\tensor I)\left(\ket{\tilde{\psi}^{(k)}}\tensor \ket{\aux_k}\right)$. However, then $\Supp_A\left((U_A^*\tensor I)\left(\ket{\tilde{\psi}^{(k)}}\tensor \ket{\aux_k}\right)\right)$ is orthogonal to $\Supp_A\left((U_A^*\tensor I)\left(\ket{\tilde{\psi}^{(\ell)}}\tensor \ket{\aux_\ell}\right)\right)$ for every $\ell \neq k$, which was what we needed to show.
\end{proof}

\section{Convex self-testing of the Glued Magic Square game}
\label{sec:self-testing}
We now show that if two self-tested pseudo-telepathy binary LCS games are glued together to form another pseudo-telepathy game, then the resulting glued game self-tests the common state of the two subgames, as long as the self-tested quantum strategies for each subgame satisfy certain commutation relations. We show that these relations hold for the Magic Square. For the Magic Pentagram such relations also follow similarly and are relegated to Appendix \ref{sec:pentagram}.
First, we need a lemma to show that if some operator preserves the state, then it acts like the identity operator.

\begin{lemma}
\label{lem:state-preserve-implies-identity}
Suppose $\ket\psi \in \mathcal H_A\tensor \mathcal H_B$ is a state and $G \in \mathcal B(\mathcal H_A)$ is some operator satisfying $(G\tensor I_B)\ket\psi = \ket\psi$. Then $G|_{\Supp_A(\ket\psi)} = I_A|_{\Supp_A(\ket\psi)}$.
\end{lemma}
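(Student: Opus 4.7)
The plan is to use the Schmidt decomposition of $\ket\psi$ to reduce the hypothesis $(G\tensor I_B)\ket\psi = \ket\psi$ to a statement about how $G$ acts on each Schmidt vector of $\ket\psi$ in $\mathcal H_A$, which by definition span $\Supp_A\ket\psi$.

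Concretely, I would write $\ket\psi = \sum_i \lambda_i \ket{v_i}\ket{w_i}$ in Schmidt form, with $\lambda_i > 0$ and $\{\ket{v_i}\}$, $\{\ket{w_i}\}$ orthonormal in $\mathcal H_A$ and $\mathcal H_B$ respectively. Applying $G \tensor I_B$ to this expression yields $\sum_i \lambda_i (G\ket{v_i})\otimes\ket{w_i}$, and the hypothesis says this equals $\sum_i \lambda_i \ket{v_i}\otimes\ket{w_i}$. Taking the inner product with $\ket{w_j}$ on the right factor then separates the sum: using orthonormality of the $\ket{w_i}$, I obtain $\lambda_j G\ket{v_j} = \lambda_j \ket{v_j}$ for each $j$, and since $\lambda_j > 0$ this gives $G\ket{v_j} = \ket{v_j}$.

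Since $\Supp_A\ket\psi$ is by definition the closed span of $\{\ket{v_j}\}_j$ and $G$ is bounded (hence continuous) and acts as the identity on each $\ket{v_j}$, it acts as the identity on all of $\Supp_A\ket\psi$ by linearity and continuity. The only mild subtlety is the infinite-dimensional case, where one should justify that the coefficient-matching via the partial inner product $\langle w_j|\cdot\rangle$ on the second factor is legitimate for convergent sums in $\mathcal H_A \tensor \mathcal H_B$; this is immediate because the partial inner product with a fixed vector defines a bounded operator $\mathcal H_A\tensor \mathcal H_B \to \mathcal H_A$. I do not anticipate any real obstacle: the lemma is essentially a clean consequence of the fact that the Schmidt vectors on Bob's side are linearly independent and hence allow us to read off each component of the equation on Alice's side separately.
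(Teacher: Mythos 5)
Your proof is correct and is essentially the same as the paper's: both Schmidt-decompose $\ket\psi$, pair with $\bra{w_j}$ on Bob's side to isolate $\lambda_j G\ket{v_j} = \lambda_j\ket{v_j}$, and conclude by linearity and continuity on $\Supp_A\ket\psi$. Your extra remark justifying the partial inner product in the infinite-dimensional case is a small addition that the paper leaves implicit.
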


\begin{proof}
Let $\ket\psi = \sum_i \lambda_i \ket{v_i,w_i}$ be a Schmidt decomposition with  $\lambda_i > 0$, and orthonormal sets $\{\ket{v_i}\} \subseteq \mathcal H_A$ and $\{\ket{w_i}\} \subseteq \mathcal H_B$. Since $(G\tensor I_B)\ket\psi = \ket\psi$, we get:
\begin{equation*}
\sum_i\lambda_i(G\ket{v_i})\ket{w_i} = \sum_i\lambda_i\ket{v_i}\ket{w_i},
\end{equation*}
but left-multiplying with $\bra{w_i}$ we get by orthonormality of $\{\ket{w_i}\}$ that $\lambda_i(G\ket{v_i}) = \lambda_i\ket{v_i}$, so $G\ket{v_i} = \ket{v_i}$. Thus, as $\{\ket{v_i}\}_i$ constitutes a basis for $\Supp_A(\ket\psi)$, we get the desired result.
\end{proof}

Before we are able to use this, we will need a result allowing us to restrict observables to the support of the state.

\begin{lemma}
Suppose $A \in \mathcal B(\mathcal H_A)$ and $B \in \mathcal B(\mathcal H_B)$ are two observables, and $\ket\psi \in \mathcal H_A \tensor \mathcal H_B$ is a state such that $\squeeze\psi{(A \tensor B)} = 1$. Then by defining $\mathcal H_A^1 := \Supp_A(\ket\psi)$ and $\mathcal H_A^0 := \left(\mathcal H_A^1\right)^\perp$, it holds that $\Ran(A|_{\mathcal H_A^k}) \subseteq \mathcal H_A^k$ for $k\in\{0,1\}$, and furthermore $A = A|_{\mathcal H_A^0} \dsum A|_{\mathcal H_A^1}$.
\end{lemma}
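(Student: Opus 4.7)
The plan is to first upgrade the scalar hypothesis $\squeeze{\psi}{(A\tensor B)} = 1$ to the operator identity
\[(A\tensor B)\ket\psi = \ket\psi.\]
Since $A$ and $B$ are observables, $A\tensor B$ is itself a self-adjoint unitary, so its spectrum is contained in $\{-1,+1\}$. A unit vector whose expectation under such an operator attains the maximum value $1$ must lie entirely in the $+1$-eigenspace, yielding the displayed identity.

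The key manoeuvre is then to exploit $B^2 = I_B$ to decouple the two tensor factors. Left-multiplying by $I_A \tensor B$ gives
\[(A\tensor I_B)\ket\psi \;=\; (I_A\tensor B)\ket\psi.\]
Now fix a Schmidt decomposition $\ket\psi = \sum_i \lambda_i \ket{v_i}\ket{w_i}$ with each $\lambda_i > 0$, and apply $I_A \tensor \bra{w_j}$ to both sides. The left-hand side collapses to $\lambda_j A\ket{v_j}$, while the right-hand side becomes $\sum_i \lambda_i \bra{w_j}B\ket{w_i}\,\ket{v_i}$. Since $\lambda_j > 0$, dividing exhibits $A\ket{v_j}$ explicitly as a convergent linear combination of the $\ket{v_i}$'s, so $A\ket{v_j}\in \mathcal H_A^1 = \overline{\mathrm{span}\{\ket{v_i}\}_i}$. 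Linearity together with boundedness of $A$ then promotes this pointwise statement to the inclusion $A(\mathcal H_A^1)\subseteq \mathcal H_A^1$, i.e.\ $\Ran(A|_{\mathcal H_A^1})\subseteq \mathcal H_A^1$.

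Invariance of the orthogonal complement $\mathcal H_A^0 = (\mathcal H_A^1)^\perp$ follows immediately from self-adjointness of $A$: a bounded self-adjoint operator that preserves a closed subspace automatically preserves its orthogonal complement, because for $\ket x \in \mathcal H_A^0$ and $\ket y\in \mathcal H_A^1$ one has $\bra{y}A\ket{x} = \overline{\bra{x}A\ket{y}} = 0$ as $A\ket{y}\in \mathcal H_A^1 \perp \ket x$. Hence also $\Ran(A|_{\mathcal H_A^0})\subseteq \mathcal H_A^0$, and combining the two invariances gives the direct-sum decomposition $A = A|_{\mathcal H_A^0}\dsum A|_{\mathcal H_A^1}$. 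I do not anticipate a real obstacle: the argument is a close cousin of Lemma~\ref{lem:state-preserve-implies-identity}, with the only genuinely new ingredient being the $B^2 = I_B$ decoupling trick that converts the joint equation $(A\tensor B)\ket\psi = \ket\psi$ into one whose two sides act on disjoint tensor factors, so that Schmidt bookkeeping isolates the action of $A$ alone.
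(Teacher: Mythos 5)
Your proof is correct, and it is a close cousin of the paper's but with a cleaner and more rigorous execution. Both arguments share the same skeleton: upgrade $\squeeze\psi{(A\tensor B)} = 1$ to the eigenvector identity $(A\tensor B)\ket\psi = \ket\psi$ (the paper via Cauchy--Schwarz, you via the spectral theorem — both fine), Schmidt-decompose, and then use orthonormality on Bob's side to isolate the action of $A$ on each $\ket{v_j}$. Where the paths diverge is in the middle step: the paper sets up a proof by contradiction, assuming $A\ket{v_j}$ has a nonzero component $\ket{\varphi_0}\in\mathcal H_A^0$, and then asserts that the vectors in the resulting expansion of $\ket\psi$ are ``linearly independent'' to derive a contradiction — an assertion that on its own does not immediately yield the contradiction without invoking the orthogonality of the $B\ket{w_i}$ (which is what is really doing the work there). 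Your manoeuvre of left-multiplying by $I_A\tensor B$ to get $(A\tensor I_B)\ket\psi = (I_A\tensor B)\ket\psi$, and then pairing with $\bra{w_j}$, makes this precise in one stroke: the left side collapses to $\lambda_j A\ket{v_j}$ while the right side is manifestly a norm-convergent series in $\mathcal H_A^1$, so $A\ket{v_j}\in\mathcal H_A^1$ follows directly. For the complement, you invoke self-adjointness where the paper invokes $A^{-1}=A$; these are equivalent for an observable and both are correct. In short: same underlying idea, but your decoupling trick via $B^2=I_B$ replaces the paper's somewhat under-justified contradiction step with a direct computation, which is arguably the better write-up.
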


\begin{proof}
Initially observe that $\norm{A\tensor B} \leq 1$, and so Cauchy-Schwarz implies that $A\tensor B\ket\psi = \ket\psi$. Now Schmidt decompose the state $\ket\psi = \sum_i\lambda_i \ket{v_i}\ket{w_i}$, and observe that $\mathcal H_A^1 = \overline{\Span_i\{\ket{v_i}\}}$. Suppose for contradiction that $\Ran(A|_{\mathcal H_A^1}) \cap \mathcal H_A^0 \neq \{0\}$. Then for some $j$, as $\{\ket{v_i}\}$ constitutes a basis for $\mathcal H_A^1$, $A\ket{v_j} = \ket{\varphi_0} + \ket{\varphi_1}$, where $\ket{\varphi_k} \in \mathcal H_A^k$, for $k \in \{0, 1\}$, and $\ket{\varphi_0} \neq 0$. However, it therefore holds that
\begin{align*}
\sum_i \lambda_i \ket{v_i}\ket{w_i} &= \ket\psi = (A \tensor B)\ket\psi\\
&= \sum_i\lambda_i(A\ket{v_i}) \tensor(B\ket{w_i})\\
&= \sum_{i \neq j}\lambda_i(A\ket{v_i})\tensor (B\ket{w_i}) + \lambda_j\ket{\varphi_1}\tensor (B\ket{w_j})+\lambda_j\ket{\varphi_0} \tensor (B\ket{w_j})
\end{align*}
But noting that $A$ and $B$ are both observables, they are therefore isometries and thus as $\{\ket{v_i}\}_i$ and $\{\ket{w_i}\}_i$ are both linearly independent sets, all vectors in the above sum are linearly independent. But as $\ket{\varphi_0} \notin \mathcal H_A^1$, this is a contradiction as $\ket\psi \in \mathcal H_A^1 \tensor \mathcal H_B^1$. Therefore $\Ran(A|_{\mathcal H_A^1}) \subseteq \mathcal H_A^1$, and then by taking $\ket\phi \in \mathcal H_A^1$ we observe that $A^{-1} \ket\phi = A\ket\phi \in \mathcal H_A^1$, so it necessarily also holds that $\Ran(A|_{\mathcal H_A^0}) \subseteq \mathcal H_A^0$.
\end{proof}

In case of a game with consistent observables, such as the Magic Square game or the Magic Pentagram game, this implies that it is well-defined to restrict Alice's (resp., Bob's) operators to $\Supp_A(\ket\psi)$ (resp., $\Supp_B(\ket\psi)$. Therefore, we can without loss of generality assume the state to have full Schmidt rank.

\begin{lemma}
\label{lem:glue-ms-commutation}
Fix a perfect strategy $\left(\ket\psi \in \mathcal H_A\tensor \mathcal H_B, \{A_x\}_{x}, \{B_y\}_{y}\right)$ for a glued pseudo-telepathy binary LCS game with a Magic Square part, and suppose the observables $A_1, \dots, A_9$ for Alice (resp., $B_1, \dots, B_9$ for Bob) correspond to the Magic Square part, as illustrated in Figure \ref{subfig:magic-square-game}. Then with $E := A_3A_6A_9$ and $F:=B_3B_6B_9$, it holds that
\begin{align*}
\comm{E, A_i}\tensor I_B\ket\psi = 0, \quad \text{ and } \quad I_A \tensor \comm{F, B_j}\ket\psi = 0
\end{align*}
for all $i,j \in [9]$.
\end{lemma}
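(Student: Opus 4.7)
My plan is to establish the stronger \emph{operator} identities $[A_i, E] = 0$ and $[B_j, F] = 0$ for all $i, j \in [9]$, from which the two displayed equations follow immediately. Since by the preceding discussion I may assume $\ket\psi$ has full Schmidt rank, Theorem~\ref{thm:operator-sol} applies and gives the operator equations $A_1 A_2 A_3 = A_4 A_5 A_6 = A_7 A_8 A_9 = A_1 A_4 A_7 = A_2 A_5 A_8 = I$ together with pairwise commutation of observables within each equation. Since $e_3, e_6, e_9$ all appear in the glued equation, $A_3, A_6, A_9$ pairwise commute, so $E = A_3 A_6 A_9$ is itself an observable with $E^2 = I$; in particular $[A_i, E] = 0$ is trivial for $i \in \{3, 6, 9\}$.

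The crux is to establish the pair of operator identities
\begin{align*}
E \;=\; (A_1 A_5)^2 \;=\; (A_2 A_4)^2,
\end{align*}
which re-express $E$ using only Magic Square observables outside the glued column. The first is obtained by substituting $A_3 = A_1 A_2$, $A_6 = A_4 A_5$ and $A_9 = A_7 A_8 = (A_1 A_4)(A_2 A_5)$ to get $E = A_1 A_2 A_4 A_5 A_1 A_4 A_2 A_5$, then using $[A_1, A_4] = [A_4, A_5] = 0$ with $A_4^2 = I$ to collapse $A_4 A_5 A_1 A_4$ down to $A_5 A_1$, and finally using $[A_1, A_2] = 0$ with $A_2^2 = I$ to collapse $A_2 A_1 A_2$ down to $A_1$. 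The identity $E = (A_2 A_4)^2$ arises from the symmetric calculation starting instead from $A_9 = A_8 A_7 = (A_2 A_5)(A_1 A_4)$, with the column-2 simplification $A_5 A_2 A_5 = A_2$ playing the analogous role.

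With both presentations of $E$ in hand, each remaining commutation is an elementary shuffle: push $A_1$ (resp.\ $A_5$) rightward through $(A_2 A_4)^2$ using row~1 and column~1 (resp.\ column~2 and row~2) commutations, and symmetrically push $A_2$ (resp.\ $A_4$) through $(A_1 A_5)^2$. The cases $i = 7$ and $i = 8$ then follow from the factorisations $A_7 = A_1 A_4$ and $A_8 = A_2 A_5$ by composing the commutations already established. The Bob-side claim is proved identically, since by Theorem~\ref{thm:operator-sol} the observables $B_1, \dots, B_9$ satisfy the same Magic Square relations. The main obstacle is locating the dual presentations $E = (A_1 A_5)^2 = (A_2 A_4)^2$: they are invisible from the definition of $E$ and are precisely what substitutes for the standalone-Magic-Square identity $A_3 A_6 A_9 = -I$, which need not survive the gluing. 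Once these presentations are secured, every commutation case reduces to a one- or two-line rearrangement using only the row and column commutations of the Magic Square part.
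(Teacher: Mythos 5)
Your proof is correct, and it takes a genuinely different algebraic route from the paper's. The paper works case by case: it shows $A_3A_6A_8 = A_2A_7A_2$ is a palindrome, applies the adjoint to conclude $[A_3A_6,A_8]=0$, repeats for $A_7$, and then invokes ``similarly'' for $i\in\{1,2,4,5\}$ (which on inspection does require finding a fresh palindrome for each case, e.g.\ $A_6A_9A_1 = A_4A_2A_4$ and $A_3A_9A_5 = A_2A_4A_2$). You instead derive the two closed-form presentations $E = (A_1A_5)^2 = (A_2A_4)^2$ once and for all, after which every individual commutation for $i\in\{1,2,4,5\}$ becomes an immediate ``push-through'' using only row/column commutations, and $i\in\{7,8\}$ follows by factoring $A_7=A_1A_4$, $A_8=A_2A_5$. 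Your route is somewhat more unified and self-contained, at the cost of having to spot those two presentations; the paper's route is shorter per case but leaves several verifications to the reader. Two small remarks: your reduction of $E$ to $(A_1A_5)^2$ silently uses the column-2 commutation $[A_2,A_5]=0$ to bring $A_2A_1A_2$ adjacent before collapsing it (the intermediate expression $A_1A_2A_5A_1A_2A_5$ must first become $A_1A_5A_2A_1A_2A_5$), and all of these identities hold only as operators on $\Supp_A\ket\psi$, which suffices since the lemma's conclusion is an equality applied to $\ket\psi$; both of these match what the paper implicitly does.
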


\begin{proof}
Note that by symmetry, we need only show $\left(\vphantom{\big|}\comm{E, A_i}\tensor I_B\right)\ket\psi = 0$ for all $i \in [9]$. Note furthermore that in any perfect strategy, $A_3$, $A_6$ and $A_9$ commute pairwise (Theorem \ref{thm:operator-sol}(b)), so in particular, they also commute with $E$. Therefore we need only show the statement for $i \in \{1, 2, 4, 5, 7, 8\}$. 

Note that we can Schmidt decompose the state as $\ket\psi = \sum_i \lambda_i \ket{v_i,w_i}$, for $\lambda_i > 0$ and orthonormal sets $\{\ket{v_i}\} \subseteq \mathcal H_A$, and $\{\ket{w_i}\} \subseteq \mathcal H_B$. Initially suppose the state has full Schmidt rank, which in particular implies that $\{\ket{v_i}\}$ constitutes an orthonormal basis for $\mathcal H_A$. Then, to prove that an operator is the identity operator, it suffices to note that it preserves $\ket\psi$ and then use Lemma~\ref{lem:state-preserve-implies-identity}.

This implies that any product of operators in a row or column of the Magic Square game (as illustrated in Figure \ref{subfig:magic-square-game}) is identity, except for $E$, as these products applied to the state must preserve the state.  We can now use this to prove the required commutation relations. Observe for example that we therefore have $A_1A_2A_3 = I$ and $A_4A_5A_6 = I$, and by using that these are all $\pm1$-valued observables, commuting within each equation, we get $A_3 = A_2A_1$, $A_6 = A_4A_5$, and thus
\begin{equation*}
A_3A_6 = A_2A_1A_4A_5 = A_2A_7A_5,
\end{equation*}
by using that $A_1A_4A_7 = I$, i.e. $A_1A_4 = A_7$. Therefore,
\begin{equation*}
A_3A_6A_8 = A_2A_7A_5A_8 = A_2A_7A_2,
\end{equation*}
by using that $A_2A_5A_8 = I$. But this latter fact now implies, using commutation of $A_3$ with $A_6$ and self-adjointness of the observables, that:
\begin{equation*}
A_3A_6A_8 = A_2^*A_7^*A_2^* = (A_2A_7A_2)^* = (A_3A_6A_8)^* = A_8^*A_6^*A_3^* = A_8A_6A_3.
\end{equation*}
Thus, $\comm{A_3A_6,A_8} = 0$. A similar argument gives $\comm{A_3A_6,A_7} = 0$. However, as $A_9$ commutes with $A_3$, $A_6$, $A_7$ and $A_8$, we clearly get $\comm{E,A_7} = \comm{A_3A_6A_9, A_7} = 0$, and also $\comm{E, A_8} = 0$. Commutation for $i \in \{1,2,4,5\}$ can be shown similarly.

Now we need only consider the case where $\ket\psi$ does not have full Schmidt rank. In this case, note that for $i \in \{1, \dots, 18\}$, we have $\left(A_i \tensor B_i\right) \ket\psi = \ket\psi$, since $\squeeze\psi{\left(\vphantom{\big|}A_i\tensor B_i\right)} = 1$ and $\norm{A_i \tensor B_i} \leq 1$. However this implies $\Span\{\ket{v_i}\}$ is invariant under all $A_j$, and similarly $\Span\{\ket{w_i}\}$ is invariant under all $B_j$. Therefore we can just restrict the operators to these subspaces, and then we from the above part get the desired statement.
\end{proof}

This lemma along with Lemma \ref{lem:glue-mp-commutation} show that if we glue together the Magic Square or the Magic Pentagram with themselves or each other, then it is well-defined to restrict a perfect quantum strategy for the glued game to one of its constituent parts. For example, if we consider the Glued Magic Square game, then we can restrict a perfect strategy for Alice to the $-1$-eigenspace of $E$, at least if we only do that for $A_1, \dots, A_9$. If we do so, then on that subspace $A_3A_6A_9 = -I$, and this implies that the restricted operators, which we again denote by  $A_1, \dots, A_9$, constitute a perfect strategy for the Magic Square game. While this cannot be used to extract the original observables as we have restricted to a particular space (and it was shown explicitly in \cite{algebraic-chsh} that self-testing does not hold, so that is impossible to do in general), it turns out that the self-testing properties of Magic Square force the state to be equivalent to $\ket{\psi_4}$ up to local isometry. Thus, we end up with self-testing  the state, which we formally prove in Theorem \ref{thm:gms-convex-self-test}.

We now consider the set of ideal strategies for the Glued Magic Square game.

\begin{example}
\label{ex:gms-irreducible-strategies}
Let $\tilde A_1, \dots, \tilde A_9$ denote the optimal operators (for Alice) for the Magic Square game from Theorem \ref{thm:magic-square-selftest}. Let $\sigma : (\mathbb Z/2\mathbb Z)^{\times 4} \to \mathcal B(\mathcal H)$ be a representation of the abelian group $(\mathbb Z/2\mathbb Z)^{\times 4} = \langle e_1, e_2, e_3, e_4 \mid e_i^2 = \comm{e_i, e_j} = 1\;\forall i,j\in[4]\rangle$. Set
\begin{align*}
G_1 &:= \sigma(e_1) & G_2 &:= \sigma(e_2) & G_3 &:= \sigma(e_1e_2)\\
G_4 &:= \sigma(e_3) & G_5 &:= \sigma(e_4) & G_6 &:= \sigma(e_3e_4)\\
G_7 &:= \sigma(e_1e_3) & G_8 &:= \sigma(e_2e_4) & G_9 &:= \sigma(e_1e_2e_3e_4)
\numberhere\label{eq:representation-mapping}
\end{align*} Define $A_i := B_i := \tilde A_i$ and $A_{19-i} := B_{19-i} := G_i$ for $i \in [9]$. Then the strategy $\mathcal S_1^\sigma := \left(\ket{\psi_4}, \left\{A_i\right\}_{i\in[18]}, \left\{B_j\right\}_{j\in[18]}\right)$ constitutes a perfect strategy for the Glued Magic Square game, assuming that $G_k$ commutes with $\tilde A_\ell$ for each $k,\ell\in\{3,6,9\}$. This can easily be verified by observing that the first nine observables for each of Alice and Bob constitute a perfect strategy for one Magic Square part, while their remaining observables all commute, square to identity and when taken the product within any row or column of how they are written in \eqref{eq:representation-mapping}, they multiply to identity.

Symmetrically, one can swap $A_i$ with $A_{i+9}$ and $B_i$ with $B_{i+9}$ for all $i\in [9]$ to get another strategy, which we will call $\mathcal S_2^\sigma$.

Note that our choice of subscript in the above example is not arbitrary, but instead indicates on which part the strategy is the optimal Magic Square strategy, i.e., a subscript of $1$ indicates that we use this for the first nine observables, corresponding to the top part of Figure \ref{fig:gms-visual}, while a subscript of $2$ indicates that we instead use the Magic Square strategy on the bottom part.
\end{example}

We now state our main theorem which characterises all the possible optimal strategies for the Glued Magic Square game. Note that while this is close to being a convex self-test in the sense of Definition \ref{def:convex-self-test}, it is not entirely so, as it in fact self-tests a family of strategies.

\begin{theorem}
\label{thm:gms-convex-self-test}
Let $\hat{\mathcal S}_k = \left\{\mathcal S_k^\sigma \mid \text{$\sigma$ is a representation of $(\mathbb Z/2\mathbb Z)^{\times 4}$}\right\}$ be the two families of possible strategies for the Glued Magic Square game as given in Example \ref{ex:gms-irreducible-strategies}.
The Glued Magic Square game is a convex self-test of the families $\hat{\mathcal S}_1$ and $\hat{\mathcal S}_2$ in the sense that for any perfect strategy $\mathcal S$ for the Glued Magic Square game, there exists representations $\sigma_1, \sigma_2$ of $(\mathbb Z/2\mathbb Z)^{\times 4}$, strategies $\mathcal S_1, \mathcal S_2$ and constants $\alpha_1, \alpha_2 \in [0,1]$, such that $\mathcal S$ is an internal convex combination of $\mathcal S_1$ and $\mathcal S_2$, i.e. $\mathcal S = \alpha_1\mathcal S_1 + \alpha_2\mathcal S_2$, and $\mathcal S_k^{\sigma_k} \in \hat{\mathcal S}_k$ is a local dilation of $\mathcal S_k$ for each $k\in[2]$. 
\end{theorem}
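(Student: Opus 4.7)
The plan is to decompose Alice's and Bob's Hilbert spaces according to eigenspaces of the two "dashed column" products, apply Magic Square self-testing on each eigenspace, and check that the remaining nine observables factor through the auxiliary as an abelian representation. By the discussion following the full Schmidt rank lemma (right after Lemma \ref{lem:glue-ms-commutation}), I may assume without loss of generality that $\ket{\psi}$ has full Schmidt rank. Set $E_1 := A_3A_6A_9$ and $E_2 := A_{10}A_{13}A_{16}$, and likewise $F_1, F_2$ on Bob's side. Theorem \ref{thm:operator-sol} applied to the unique glued equation tells me that $A_3,A_6,A_9,A_{10},A_{13},A_{16}$ pairwise commute and multiply to $-I$, so $E_1,E_2$ are commuting observables with $E_2 = -E_1$, and similarly $F_2 = -F_1$. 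Consistency $A_i\tensor B_i\ket{\psi}=\ket{\psi}$ then gives $E_1\tensor F_1\ket{\psi}=\ket{\psi}$. Two applications of Lemma \ref{lem:glue-ms-commutation} (one per Magic Square part, combined with $E_2=-E_1$) yield that $E_1$ commutes with every $A_i$ on $\ket{\psi}$, hence on all of $\mathcal{H}_A$ by full Schmidt rank, and the symmetric statement holds for $F_1$.

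Because every $A_i$ commutes with $E_1$, the decomposition $\mathcal{H}_A = \mathcal{H}_A^+ \dsum \mathcal{H}_A^-$ into $\pm 1$-eigenspaces of $E_1$ is invariant under all of Alice's observables; the analogous decomposition of $\mathcal{H}_B$ is invariant under all of Bob's. The identity $E_1\tensor F_1\ket{\psi}=\ket{\psi}$ kills the mixed blocks, so
\begin{equation*}
\ket{\psi} = \alpha_1\ket{\psi_1} + \alpha_2\ket{\psi_2}, \qquad \ket{\psi_1}\in\mathcal{H}_A^-\tensor\mathcal{H}_B^-,\ \ket{\psi_2}\in\mathcal{H}_A^+\tensor\mathcal{H}_B^+,\ \alpha_1^2+\alpha_2^2=1.
\end{equation*}
Restricting the observables to the appropriate summands exhibits $\mathcal{S}$ as the internal convex combination $\alpha_1\mathcal{S}_1 + \alpha_2\mathcal{S}_2$. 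On the Alice--Bob summand of $\mathcal{S}_1$, namely $\mathcal{H}_A^-\tensor\mathcal{H}_B^-$, we have $E_1=-I$ and $E_2=+I$, so $(A_1,\dots,A_9)$ and $(B_1,\dots,B_9)$ with the normalised state satisfy all six Magic Square equations (the ``$=1$'' column is recovered from $E_1=-I$) and thus form a perfect Magic Square strategy, whereas $(A_{10},\dots,A_{18})$ satisfy all six equations of the second Magic Square with right-hand side $+I$, a classically satisfiable system.

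Apply Theorem \ref{thm:magic-square-selftest} to the first nine observables on $\mathcal{H}_A^-\tensor\mathcal{H}_B^-$: this yields isometries $V_A^-, V_B^-$ and an auxiliary state $\ket{\aux_1}$ with
\begin{equation*}
V_A^-\tensor V_B^-\,\tfrac{\ket{\psi_1}}{\alpha_1} = \ket{\psi_4}\tensor\ket{\aux_1}, \qquad V_A^-\tensor V_B^-(A_i\tensor I)\tfrac{\ket{\psi_1}}{\alpha_1} = (\tilde A_i\tensor I)\ket{\psi_4}\tensor\ket{\aux_1}\ \text{for}\ i\in[9].
\end{equation*}
For $i\ge 10$ I must show that the image vector equals $\ket{\psi_4}\tensor (G_{19-i}\tensor I)\ket{\aux_1}$ for some observable $G_{19-i}$ on Alice's auxiliary. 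The key algebraic fact is that the canonical Magic Square observables $\tilde A_1,\dots,\tilde A_9$ generate $M_4(\mathbb{C})$, so any operator commuting with every $\tilde A_j\tensor I$ has the form $I\tensor G$. Using the Alice--Bob consistency $(A_i\tensor I)\ket{\psi_1} = (I\tensor B_i)\ket{\psi_1}$ together with the pull-through identities $V_A^-A_j = (\tilde A_j\tensor I)V_A^-$ for $j\in[9]$, one verifies that the image of $A_i$ ($i\ge 10$) acts trivially on the $\mathbb{C}^4$ factor of the dilated state, yielding the required $G_{19-i}$. The six row/column equations of the second Magic Square grid have rank five (a single dependency), so the family $\{G_{19-i}\}$ is generated by four commuting involutions, exactly the representation $\sigma_1$ of $(\mathbb{Z}/2\mathbb{Z})^{\times 4}$ described in Example \ref{ex:gms-irreducible-strategies}; thus $\mathcal{S}_1^{\sigma_1}\in\hat{\mathcal{S}}_1$ is a local dilation of $\mathcal{S}_1$. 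The symmetric argument on $\mathcal{H}_A^+\tensor\mathcal{H}_B^+$ produces $\sigma_2$ and $\mathcal{S}_2^{\sigma_2}\in\hat{\mathcal{S}}_2$, completing the proof.

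The main obstacle is the factorisation step for $i\ge 10$: translating the operator-algebraic statement ``the full matrix algebra has trivial commutant'' into the state-dependent self-testing framework, so that the remaining observables provably become $I\tensor G$ after the Magic Square isometry is applied and the $G$'s assemble into an abelian representation consistent with Bob.
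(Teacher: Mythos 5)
Your opening half is sound and parallels the paper: you correctly reduce to full Schmidt rank, observe $E_2 = -E_1$ and $F_2 = -F_1$ from Theorem~\ref{thm:operator-sol}, deduce that the consistency relations give $(E_1\tensor F_1)\ket\psi = \ket\psi$, push the conclusion of Lemma~\ref{lem:glue-ms-commutation} to exact operator commutation via full Schmidt rank, decompose $\mathcal H_A$ into the $\pm1$-eigenspaces of $E_1$, observe that the mixed blocks are killed, and apply Magic Square self-testing on each summand. (The paper phrases the decomposition via $E^-+F^- = I$ on the support and works directly with projectors so as not to assume full Schmidt rank up front, but under your assumption these coincide.)

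The extraction step, however, is not merely an unfilled gap --- it is the wrong target. You aim to show that the pushed-forward $A_i$ for $i\ge 10$ takes the form $I\tensor G_{19-i}$, i.e.\ acts \emph{trivially} on the $\mathbb{C}^4$ factor, by invoking the commutant of $\{\tilde A_1,\dots,\tilde A_9\}$. But that argument requires the observables $A_i$ ($i\ge10$) to commute with all nine $A_j$ ($j\le 9$), whereas the perfect-strategy hypothesis (Theorem~\ref{thm:operator-sol}(b)) only forces commutation with the shared constraint, namely $A_3, A_6, A_9$. Indeed, the reference family in Example~\ref{ex:gms-irreducible-strategies} explicitly allows $G_k$ to act nontrivially on $\mathbb{C}^4$, subject only to commuting with $\tilde A_3, \tilde A_6, \tilde A_9$; a trivial action on $\mathbb{C}^4$ is a strictly special case. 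A separate issue is that the ``pull-through identities'' $V_A^-A_j = (\tilde A_j\tensor I)V_A^-$ do not hold as operator equalities in self-testing --- they only hold when applied to the extracted state --- so even if the commutant step were the right goal, you could not transport a commutation relation from the original space to the entire dilated space in this way.

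What the paper does instead is purely algebraic and stays on the original (restricted) Hilbert space. On the eigenspace where $A_{10}A_{13}A_{16}=-I$ holds, the glued constraint forces $A_3A_6A_9 = I$. Combining this with the five even Magic Square constraints among $A_1,\dots,A_9$, one computes directly that $A_2A_4A_2A_4 = A_3A_9A_6 = I$, hence $\comm{A_2,A_4}=0$, and similarly $\comm{A_1,A_5}=0$, so $A_1,\dots,A_9$ restricted to that eigenspace generate an abelian group of involutions, i.e.\ a representation $f$ of $(\mathbb{Z}/2\mathbb{Z})^{\times 4}$. The representation $\sigma_2$ is then defined by conjugating $f$ by the self-testing isometry $V_{2,A}$, and the local-dilation relations are checked on the state. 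No factorisation through the auxiliary is needed or claimed. Replacing your commutant argument with this direct commutation computation repairs the proof.
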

\begin{proof}
Let $\mathcal{S} = (\ket\psi\in \mathcal{H}_A\otimes \mathcal{H}_B,\{A_i\}_{i\in [18]},\{B_j\}_{j\in [18]})$ be a perfect quantum strategy for the Glued Magic Square game. Set
\begin{align*}
E=A_3A_6A_9, \qquad  F=A_{10}A_{13}A_{16}, \qquad G := B_3B_6B_9, \qquad H := B_{10}B_{13}B_{16}.
\end{align*}

We begin by decomposing the spaces $\mathcal{H}_A$ and $\mathcal{H}_B$. Consider the eigendecomposition: $E = E^+-E^-$ and $F = F^+-F^-$. As the quantum strategy $\mathcal{S}$ is assumed to be perfect, we have $(EF\tensor I_B)\ket\psi = -\ket\psi$, which implies
\begin{align*}
-1 = \squeeze\psi{\left(\vphantom{\big|}EF \tensor I_B\right)} = \squeeze\psi{\left(\vphantom{\big|}(EF)^+ \tensor I_B\right)} - \squeeze\psi{\left(\vphantom{\big|}(EF)^- \tensor I_B\right)},\numberhere\label{eq:gms-projector-split}
\end{align*}
and since $(EF)^+$ and $(EF)^-$ are projections, we necessarily have $0 \leq \squeeze\psi{\left(\vphantom{\big|}(EF)^+\tensor I_B\right)} \leq 1$ and $0 \leq \squeeze\psi{\left(\vphantom{\big|}(EF)^-\tensor I_B\right)} \leq 1$. Therefore, the only possible way in which Equation \eqref{eq:gms-projector-split} can hold is if $\squeeze\psi{(EF)^+\tensor I_B} = 0$ and $\squeeze\psi{(EF)^-\tensor I_B} = 1$, the latter of which  implies $\left((EF)^- \tensor I_B\right) \ket\psi = \ket\psi$. By using $I_A = E^++E^-$ and $I_A = F^+ + F^-$, we get:
\begin{align*}
(EF)^- = E^+F^-+E^-F^+ = (I_A-E^-)F^-+E^-(I_A-F^-) = E^-+F^--2E^-F^-
\end{align*}
However, as $(EF)^+ = E^+F^+ + E^-F^-$, Equation \eqref{eq:gms-projector-split} implies that
\begin{equation}
\label{eq:gms-proof-ef-is-identity}
\ket\psi =  \left((EF)^-\tensor I_B\right)\ket\psi = \left((E^-+F^-)\tensor I_B \right)\ket\psi.
\end{equation}

Let $\ket\psi = \sum_i \lambda_i \ket{v_i}\ket{w_i}$ be a  Schmidt decomposition with $\lambda_i>0$ and  orthonormal sets $\left\{\ket{v_i}\right\}_i \subseteq \mathcal H_A$ and $\left\{\ket{w_i}\right\}_i \subseteq \mathcal H_B$. By defining $\mathcal H_A^0 := \Supp_A(\ket\psi)^\perp$ we now get the decomposition $\mathcal H_A = \mathcal H_A^0 \dsum \Supp_A(\ket\psi)$. However, now note that as $\left((EF)^+ \tensor I_B\right) \ket\psi = 0$, in particular we have $\left(E^-F^- \tensor I_B\right)\ket\psi = 0$. But as $\left((E^-+F^-)\tensor I_B\right) \ket\psi = \ket\psi$, this implies by Lemma \ref{lem:state-preserve-implies-identity} that on $\Supp_A(\ket\psi)$, $E^-+F^-$ is the identity operator, and therefore, as they are projections, they partition $\Supp_A(\ket\psi)$ into two subspaces, $\mathcal H_A^1 := \Ran(E^-) \cap \Supp_A(\ket\psi)$ and $\mathcal H_A^2 := \Ran(F^-) \cap \Supp_A(\ket\psi)$. Thus, we now have the desired decomposition $\mathcal H_A = \mathcal H_A^0 \oplus \mathcal H_A^1 \oplus \mathcal H_A^2$. 

Similarly, we get a decomposition $\mathcal{H}_B = \mathcal H_B^0 \oplus \mathcal H_B^1 \oplus \mathcal H_B^2$ of Bob's space using the operators $G$ and $H$.

We now wish to show that we can split the state $\ket{\psi}$ into a direct sum, but to do so requires us to show that $\Supp \ket\psi$ is orthogonal to both $\mathcal H_A^1\tensor \mathcal H_B^2$ and $\mathcal H_A^2 \tensor \mathcal H_B^1$, which we proceed to do now.

As the strategy is perfect, it is in particular consistent, i.e., $\squeeze\psi{A_i\tensor B_i} = 1$ for all $i \in [18]$. Hence, we have $A_i \tensor B_i\ket\psi = \ket\psi$. Also, as noted in the second paragraph of this proof, $E(-F)\tensor I_B\ket\psi = \ket\psi$. Thus we find:
\begin{align*}
1 &= \squeeze\psi{(A_3 \tensor B_3)(A_6 \tensor B_6)(A_9 \tensor B_9)(E(-F)\tensor I_B)} \\
&= \squeeze\psi{(-E^2F \tensor G)} = -\squeeze\psi{F \tensor G},
\end{align*}
where we used $E^2 = I_A$. But this implies $(F \tensor G)\ket\psi = -\ket\psi$, and then as $\norm{F\tensor G} \leq 1$, we find $(F\tensor G)^+\ket\psi = 0$, and in particular, $\left(F^- \tensor G^-\right)\ket\psi = 0$. Similarly we find $(E^-\tensor H^-) \ket\psi = 0$, and in total we therefore have
\begin{align*}
\ket\psi &= \left((E^-+F^-)\tensor(G^-+H^-)\right) \ket\psi\\
&= \left(E^-\tensor G^-\right) \ket\psi + \left(E^-\tensor H^-\right)\ket\psi + \left(F^-\tensor G^-\right) \ket\psi + \left(F^-\tensor H^-\right)\ket\psi\\
&= \left(E^-\tensor G^-\right) \ket\psi + \left(F^-\tensor H^-\right)\ket\psi\numberhere\label{eq:gms-selftest-state-decomp}
\end{align*}
By definition of the subspaces, we therefore have $\Supp\ket\psi \perp \mathcal H_A^2\tensor \mathcal H_B^1$ and $\Supp\ket\psi \perp \mathcal H_A^1\tensor \mathcal H_B^2$.

Therefore we can define $\ket{\varphi_1} := \left(E^-\tensor G^-\right)\ket\psi$ and $\ket{\varphi_2} := \left(F^-\tensor H^-\right)\ket\psi$; then we have $\ket\psi = \ket{\varphi_1} \dsum \ket{\varphi_2}$.
Now we can define two sub-strategies $\mathcal S_k$, for $k \in \{1, 2\}$ under the assumption that $\ket{\varphi_k} \neq 0$:
\begin{equation*}
\mathcal S_k = \left(\frac{\ket{\varphi_k}}{\norm{\ket{\varphi_k}}}\in \mathcal H_A^k \tensor \mathcal H_B^k, \left\{A_i|_{\mathcal H_A^k}\right\}_{i\in[18]}, \left\{B_j|_{\mathcal H_A^k}\right\}_{j\in[18]}\right)
\end{equation*}
However, we must be careful and verify that this strategy is well-defined -- in particular to use it we need to apply several operators in succession, so we need to verify that $\Ran A_i|_{\mathcal H_A^k} \subseteq \mathcal H_A^k$ (and similarly for $\mathcal H_B^k$). However, as $\squeeze\psi{\left(A_i\tensor B_i\right)} = 1$, we have $\Ran\left(\left(A_i\tensor B_i\right)|_{\Supp_A(\ket\psi)\tensor \mathcal H_B}\right) \subseteq \Supp_A(\ket\psi) \tensor \mathcal H_B$. This implies that $\Ran A_i|_{\mathcal H_A^1\dsum \mathcal H_A^2} \subseteq \mathcal H_A^1 \dsum \mathcal H_A^2$, and so it is well-defined to restrict observables to this space. Now, further restricting to $\mathcal H_A^1$, we note that this can be accomplished by first applying $E^-|_{\mathcal H_A^1\dsum \mathcal H_A^2}$, i.e.:
\begin{equation*}
A_i|_{\mathcal H_A^1} = A_i|_{\mathcal H_A^1\dsum \mathcal H_A^2}E^-|_{\mathcal H_A^1\dsum \mathcal H_A^2} = A_i|_{\mathcal H_A^1\dsum \mathcal H_A^2}\left(E^-|_{\mathcal H_A^1\dsum \mathcal H_A^2}\right)^2 = E^-|_{\mathcal H_A^1\dsum \mathcal H_A^2}A_i|_{\mathcal H_A^1\dsum \mathcal H_A^2}E^-|_{\mathcal H_A^1\dsum \mathcal H_A^2}
\end{equation*}
Here we have used that $\comm{E^-|_{\mathcal H_A^1\dsum \mathcal H_A^2}, A_i|_{\mathcal H_A^1\dsum \mathcal H_A^2}} = 0$, and thus we can commute $E^-|_{\mathcal H_A^1\dsum \mathcal H_A^2}$ to the beginning of the equation. Observe that while Lemma \ref{lem:glue-ms-commutation} only gives $\comm{E^-|_{\mathcal H_A^1\dsum \mathcal H_A^2}, A_i|_{\mathcal H_A^1\dsum \mathcal H_A^2}} = 0$ for $i \in [9]$, it also gives $\comm{F^-|_{\mathcal H_A^1\dsum \mathcal H_A^2}, A_i|_{\mathcal H_A^1\dsum \mathcal H_A^2}} = 0$ for $i \in \{10,\dots, 18\}$, and Equation \ref{eq:gms-proof-ef-is-identity} implies that $E^-|_{\mathcal H_A^1 \dsum \mathcal H_A^2} + F^-|_{\mathcal H_A^1 \dsum \mathcal H_A^2} = I|_{\mathcal H_A^1 \dsum \mathcal H_A^2}$, so $\comm{E^-|_{\mathcal H_A^1\dsum \mathcal H_A^2}, A_i|_{\mathcal H_A^1\dsum \mathcal H_A^2}} = 0$ as well.
However, this now implies $\Ran\left(A_i|_{\mathcal H_A^1}\right) \subseteq \mathcal H_A^1$, and similar arguments show that the range of the rest of the operators is also contained within the space they are restricted to. This implies that the strategy is indeed well-defined.

However, it is now easy to verify that $\mathcal S_1$ and $\mathcal S_2$ both are perfect strategies for the Magic Square game, and therefore by self-testing of the Magic Square game, Theorem \ref{thm:magic-square-selftest}, for each $k\in\{1, 2\}$, there exists isometries $V_{k,A} : \mathcal H_A^k \to \mathbb C^4 \tensor \mathcal H_{A,\aux}^k$ and $V_{k,B} : \mathcal H_B^k \to \mathbb C^4 \tensor \mathcal H_{B,\aux}^k$ and normalised states $\ket{\aux_k}$ such that
\begin{align*}
\left(V_{k, A} \tensor V_{k, B}\right)\frac{\ket{\varphi_k}}{\norm{\ket{\varphi_k}}} &= \ket{\psi_4} \tensor \ket{\aux_k} && k \in \{1, 2\}\numberhere\label{eq:gms-selftest-proof-state-isometry}\\
\left(V_{1, A} \tensor V_{1, B}\right)\left(A_i|_{\mathcal H_A^1}\tensor B_j|_{\mathcal H_B^1}\right)\frac{\ket{\varphi_1}}{\norm{\ket{\varphi_1}}} &= \left(\tilde A_i \tensor \tilde B_j\right)\ket{\psi_4} \tensor \ket{\aux_1}, && i,j\in[9]\\
\left(V_{2, A} \tensor V_{2, B}\right)\left(A_i|_{\mathcal H_A^2}\tensor B_j|_{\mathcal H_B^2}\right)\frac{\ket{\varphi_2}}{\norm{\ket{\varphi_2}}} &= \left(\tilde A_{i-9} \tensor \tilde B_{j-9}\right)\ket{\psi_4} \tensor \ket{\aux_2}, && i,j\in\{10,\dots, 18\},
\end{align*}
where the operators $\tilde A_i,\tilde B_j$, $i,j\in[9]$ are the ideal operators for the Magic Square game.

We now need only show that the remaining operators constitute a representation of $(\mathbb Z/2\mathbb Z)^{\times 4}$. Therefore we consider how $A_1, \dots, A_9$ acts on $\mathcal H_A^2$; note that how $A_{10}, \dots, A_{18}$ acts on $\mathcal H_A^1$ follows symmetrically. Recall that $A_3A_6A_9A_{10}A_{13}A_{16} = -I$ when restricted to $\mathcal H_A^1 \dsum \mathcal H_A^2$, which in particular implies it holds on $\mathcal H_A^2$. However, that space is characterised by $A_{10}A_{13}A_{16} = -I$, which implies $A_3A_6A_9 = I$ on $\mathcal H_A^2$. However, then we see (note that this is restricted to $\mathcal H_A^2$ -- we temporarily suppress the notation indicating restriction):
\begin{align*}
A_2A_4A_2A_4 = A_3A_1A_1A_7A_8A_5A_5A_6 = A_3\overbrace{A_7A_8}^{\mathclap{=A_9}}A_6 = A_3A_9A_6 = I
\end{align*}
Thus, $\comm{A_2, A_4} = 0$, and symmetrically, $\comm{A_1, A_5} = 0$. We note that we can write
\begin{equation*}
(\mathbb Z/2\mathbb Z)^{\times 4} = \langle x_1, x_2, x_3, x_4 \mid x_i^2, \comm{x_i, x_j}, i,j\in[4]\rangle,
\end{equation*}
so define the function $f : G \to \mathcal B(\mathcal H_A^2)$ by
\begin{align*}
f(x_1) &= A_1|_{\mathcal H_A^2} & f(x_2) &= A_2|_{\mathcal H_A^2}\\
f(x_3) &= A_4|_{\mathcal H_A^2} & f(x_4) &= A_5|_{\mathcal H_A^2},
\end{align*}
we find that the images of the generators satisfy the constraints of the group, and therefore it is well-defined to merely extend $f$ by multiplication (in any way) to cover the entire group. But this then shows that $A_1, \dots, A_9$ restricted to $\mathcal H_A^2$ gives a representation of $(\mathbb Z/2\mathbb Z)^{\times 4}$.

Now define $\sigma_2(g) := V_{2,A}f(g)V_{2,A}^*$, and observe that since \eqref{eq:gms-selftest-proof-state-isometry} holds, we therefore have:
\begin{align*}
\left(V_{2, A} \tensor V_{2, B}\right)(f(g) \tensor I)\frac{\ket{\varphi_2}}{\norm{\ket{\varphi_2}}} &= \left(V_{2, A} \tensor V_{2, B}\right)(f(g) \tensor I)\left(V_{2, A} \tensor V_{2, B}\right)^*\left(V_{2, A} \tensor V_{2, B}\right)\frac{\ket{\varphi_2}}{\norm{\ket{\varphi_2}}}\\
&= (V_{2,A}f(g)V_{2,A}^*)\left(V_{2, A} \tensor V_{2, B}\right)\frac{\ket{\varphi_2}}{\norm{\ket{\varphi_2}}}\\
&= \sigma_2(g)\left(\ket{\psi_4}\tensor \ket{\aux_2}\right),
\end{align*}
and thus we find that the strategy restricted to $\mathcal H_A^2$ is a local dilation of $\mathcal S_2^{\sigma_2}$. Similarly we find that there is a representation $\sigma_1$ of $(\mathbb Z/2\mathbb Z)^{\times 4}$ such that the strategy restricted to $\mathcal H_A^1$ is a local dilation of $\mathcal S_1^{\sigma_1}$.

Finally note that we can remove the assumption that $\ket{\varphi_k}\neq 0$, as in that case, we must have $\ket{\varphi_{3-k}} = \ket\psi$, implying that $\mathcal S$ is then equal to $\mathcal S_{3-k}$ while $\mathcal S_k$ will then be zero. The above then implies that either $\mathcal S_1^\sigma$ or $\mathcal S_2^\tau$ is a local dilation of the nonzero strategy. 
\end{proof}

\begin{corollary}
\label{cor:gms-state-selftest}
The Glued Magic Square game self-tests the state $\ket{\psi_4}$.
\end{corollary}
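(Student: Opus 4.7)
The plan is to glue the two local dilations provided by Theorem~\ref{thm:gms-convex-self-test} into a single one, noting that Definition~\ref{def:usual-self-test} only demands the state equation \eqref{eq:local-dilation-state-def}, so the representation data can be discarded. Let $\mathcal S = (\ket\psi,\{A_i\},\{B_j\})$ be a perfect strategy. Applying Theorem~\ref{thm:gms-convex-self-test} yields constants $\alpha_1,\alpha_2\in[0,1]$ with $\alpha_1^2+\alpha_2^2 = 1$, sub-strategies $\mathcal S_1,\mathcal S_2$ with states $\ket{\psi^{(k)}}\in \mathcal H_A^k\tensor \mathcal H_B^k$ (the subspaces $\mathcal H_A^k\tensor \mathcal H_B^k$ being mutually orthogonal), and $\mathcal S_k^{\sigma_k}\in\hat{\mathcal S}_k$ that locally dilates $\mathcal S_k$. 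Unpacking Definition~\ref{def:local-dilation} for each $k$ provides isometries $V_{k,A}:\mathcal H_A^k\to \mathbb C^4\tensor \mathcal H_{A,\aux}^k$ and $V_{k,B}:\mathcal H_B^k\to \mathbb C^4\tensor \mathcal H_{B,\aux}^k$ together with auxiliary states $\ket{\aux_k}\in \mathcal H_{A,\aux}^k\tensor \mathcal H_{B,\aux}^k$ such that $(V_{k,A}\tensor V_{k,B})\ket{\psi^{(k)}} = \ket{\psi_4}\tensor\ket{\aux_k}$, while $\ket\psi = \bigdsum_{k=1}^2 \alpha_k\ket{\psi^{(k)}}$.

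Now I assemble $U_A := V_{1,A}\dsum V_{2,A}$ and $U_B := V_{1,B}\dsum V_{2,B}$ in the sense of \eqref{eq:dsum-of-ops}, extending trivially to any part of $\mathcal H_A$ and $\mathcal H_B$ outside the relevant supports. Setting $\mathcal H_{A,\aux}:=\mathcal H_{A,\aux}^1\dsum \mathcal H_{A,\aux}^2$ and $\mathcal H_{B,\aux}:=\mathcal H_{B,\aux}^1\dsum \mathcal H_{B,\aux}^2$, the canonical unitary isomorphism $\bigdsum_k(\mathbb C^4\tensor \mathcal H_{A,\aux}^k)\cong \mathbb C^4\tensor \mathcal H_{A,\aux}$ (distributivity of tensor over direct sum) allows me to regard $U_A$ as landing in $\mathbb C^4\tensor \mathcal H_{A,\aux}$, and likewise for $U_B$. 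Applying $U_A\tensor U_B$ to $\ket\psi$, the cross terms vanish since $\ket{\psi^{(k)}}\in \mathcal H_A^k\tensor \mathcal H_B^k$ (exactly as in \eqref{eq:conv-self-test-state-eq}), yielding $\bigdsum_{k=1}^2 \alpha_k\ket{\psi_4}\tensor\ket{\aux_k}$. Since $\ket{\psi_4}$ is the same factor in every summand, it pulls out through the distributivity isomorphism to leave $\ket{\psi_4}\tensor(\alpha_1\ket{\aux_1}\dsum \alpha_2\ket{\aux_2})$.

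Defining $\ket{\aux}:=\alpha_1\ket{\aux_1}\dsum\alpha_2\ket{\aux_2}\in \mathcal H_{A,\aux}\tensor \mathcal H_{B,\aux}$, which is a unit vector because its two summands live in orthogonal subspaces of the bipartite auxiliary space and $\alpha_1^2+\alpha_2^2 = 1$, produces the required $(U_A\tensor U_B)\ket\psi = \ket{\psi_4}\tensor\ket{\aux}$. The only mild subtlety is the tensor-over-direct-sum distributivity used to pull the common $\ket{\psi_4}$ outside the two summands; the essential observation is that, although the two auxiliary states $\ket{\aux_1}$ and $\ket{\aux_2}$ need not agree, they multiply a common $\ket{\psi_4}$ and can therefore be packaged into a single bipartite aux state via the direct-sum embedding. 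The degenerate case where some $\alpha_k = 0$ is handled by discarding the vanishing summand and applying the surviving local dilation directly.
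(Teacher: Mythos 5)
Your proof is correct and follows essentially the same route as the paper's: invoke Theorem~\ref{thm:gms-convex-self-test} to split the strategy into two orthogonal sub-strategies with states $\ket{\varphi_1},\ket{\varphi_2}$, extract the two isometry pairs from the two local dilations, take the direct sum $V_{1,A}\dsum V_{2,A}$ (and similarly on Bob's side), observe that the cross terms vanish, and pull the common $\ket{\psi_4}$ factor out of the direct sum to leave a single auxiliary vector $\alpha_1\ket{\aux_1}\dsum\alpha_2\ket{\aux_2}$. Your added remarks on the distributivity isomorphism, the normalisation of the assembled auxiliary state, and the degenerate case $\alpha_k=0$ make explicit details the paper passes over silently, but do not change the argument.
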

\begin{proof}
Fix a perfect strategy $\mathcal S$ for the Glued Magic Square, and use Theorem \ref{thm:gms-convex-self-test} to decompose this into $\alpha_1\mathcal S_1^\sigma+\alpha_2\mathcal S_2^\tau$. Let $\ket{\varphi_k}$ denote the state used in $\mathcal S_k$, $k \in [2]$, and note that these are both local dilations of $\mathcal S_{\text{MS}}$ whose state is $\ket{\psi_4}$. This implies that there exists isometries $V_{k,A}$ and $V_{k,B}$, for $k\in[2]$ such that $\left(V_{k,A}\tensor V_{k,B}\right)\ket{\varphi_k} = \ket{\psi_4}\tensor \ket{\aux_k}$ for some auxiliary states $\ket{\aux_k}$.

By defining $V_A := V_{1,A} \dsum V_{2,A}$ and $V_B := V_{1,B} \dsum V_{2,B}$ and extending these to $\mathcal H_A$ and $\mathcal H_B$, respectively, we find, by observing $\ket\psi = \alpha_1\ket{\varphi_1} + \alpha_2\ket{\varphi_2}$, that
\begin{align*}
\left(V_A\tensor V_B\right)\ket\psi &= \left(\left(V_{1,A} \dsum V_{2,A}\right)\tensor \left(V_{1,B} \dsum V_{2,B}\right)\right)\left(\alpha_1\ket{\varphi_1} \oplus \alpha_2\ket{\varphi_2}\right)\\
&= \bigdsum_{k=1}^2\alpha_k\left(V_{k,A} \tensor V_{k,B}\right)\ket{\varphi_k}\\
&= \bigdsum_{k=1}^2\ket{\psi_4}\tensor(\alpha_k\ket{\aux_k})\\
&= \ket{\psi_4} \tensor \left(\alpha_1\ket{\aux_1}\dsum \alpha_2\ket{\aux_2}\right),
\end{align*}
and by defining $\ket\aux := \alpha_1\ket{\aux_1}\dsum \alpha_2\ket{\aux_2}$ and noting that $\norm{\ket{\aux_1}}^2 + \norm{\ket{\aux_2}}^2 = 1$, we get the desired result.
\end{proof}
While it may seem trivial at first to extend this proof to the robust case by using robustness of the constituent games, this is a little more difficult, as our result also relies on the restriction to the eigenspaces of some operators being well-defined. As our proofs of this in Lemma \ref{lem:glue-ms-commutation} and Lemma \ref{lem:glue-mp-commutation} relied on all constraints being satisfied exactly, they cannot immediately be translated into the robust case, however in Section \ref{sec:robustness}, we sketch the required arguments.

\begin{example}
\label{ex:gms-example-strategy}
We have fully characterised all (pure) strategies for the Glued Magic Square game in Theorem \ref{thm:gms-convex-self-test}, so it is worth considering an example of such a strategy.
Let $\alpha, \beta > 0$ be two real numbers which satisfy $\alpha^2 + \beta^2 = 1$, and let $\ket\xi \in \mathbb C^5 \tensor \mathbb C^5$ be some state,. Then, the following state and observables constitute a perfect strategy for the Glued Magic Square game:
\begin{align*}
&&\ket\psi &:= \alpha \ket{\psi_4} \dsum \beta \left(\ket{\psi_4} \tensor \ket\xi\right)\\
&&A_i = B_i &:= \tilde A_i \dsum \left(\vphantom{\big|}\left(I_1 \dsum (-I_1) \dsum I_2\right)\tensor I_5\right) && i \in [9]\\
\phantom{i\in[9]}&&A_{19-i} = B_{19-i} &:= \left(\vphantom{\big|}(-I_2) \dsum I_1 \dsum (-I_1)\right) \dsum \left(\tilde A_i \tensor I_5\right) && i \in [9],
\end{align*}
where $\tilde A_i \in \mathcal B(\mathbb C^4)$ are the optimal operators for the Magic Square game. Note the structure of this strategy: it is composed of two perfect strategies for the Magic Square game -- one for each constituent part of the Glued Magic Square game. However, while the operators $A_1, \dots, A_9$ have to constitute a perfect strategy for the Magic Square game on one part, they don't have to be identity on the other part, as we can take them to be any representation of $\left(\mathbb Z/2\mathbb Z\right)^{\times 4}$ (where $A_1, A_2, A_4$ and $A_5$ corresponds to the natural generators of this group). Therefore it is in particular possible to use the direct sum of several different irreducible representations, which we have done here.
\end{example}

\subsubsection*{Acknowledgements}
L.~Man\v cinska and J.~Prakash are supported by Villum Fonden via the QMATH Centre of Excellence (Grant No. 10059). We thank J\c edrzej Kaniewski for helpful comments on a draft of this paper.

\printbibliography
\appendix

\section{Appendix}
\subsection{Magic Pentagram}
\label{sec:pentagram}
It should be noted that while we have only considered the Glued Magic Square in the main text, the proof of self-testing can easily be extended to the case where either one or both parts are the Magic Pentagram. However, for that, we need to show that gluing the Magic Pentagram with some binary LCS game, the observables used for the Magic Pentagram part will also satisfy some particular commutation relations. Note then that gluing two copies of Magic Pentagram together yields a self-test for the state $\ket{\psi_8}$, while one can only extract $\ket{\psi_4}$ if any part is Magic Square. Of course it should be noted that this self-testing result also requires an application of Theorem \ref{thm:magic-pentagram-selftest}, which states that the Magic Pentagram game self-tests $\ket{\psi_8}$ along with some ideal observables.
\begin{lemma}\label{lem:glue-mp-commutation}
Fix a perfect strategy $\left(\ket\psi \in \mathcal H_A\tensor \mathcal H_B, \{A_x\}_{x}, \{B_y\}_{y}\right)$ for a glued pseudo-telepathic LCS with a Magic Pentagram part, and suppose the observables $A_1, \dots, A_{10}$ for Alice (resp., $B_1, \dots, B_{10}$ for Bob) correspond to the Magic Pentagram part as illustrated in Figure \ref{subfig:magic-pentagram-game}. Then with $E := A_2A_3A_4A_5$ and $F := B_2B_3B_4B_5$, it holds that \begin{align*}
\comm{E, A_i}\tensor I_B\ket\psi = 0, \quad \text{ and } \quad I_A \tensor \comm{F, B_j}\ket\psi = 0
\end{align*}
for all $i,j \in [10]$.\end{lemma}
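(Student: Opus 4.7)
The plan is to mirror the proof of Lemma~\ref{lem:glue-ms-commutation}. By symmetry it suffices to establish $\comm{E, A_i}\tensor I_B\ket\psi = 0$; the statement for Bob follows identically. For $i\in\{2,3,4,5\}$ the claim is immediate from Theorem~\ref{thm:operator-sol}(b), since $A_2, A_3, A_4, A_5$ all appear together in the dashed equation of the pentagram and hence commute pairwise, so in particular each of them commutes with $E$. Thus the nontrivial cases are $i\in\{1,6,7,8,9,10\}$.

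As in the proof of Lemma~\ref{lem:glue-ms-commutation}, the consistency condition $\squeeze\psi{A_j\tensor B_j}=1$ forces $\Supp_A(\ket\psi)$ to be invariant under every $A_j$, so after restricting all observables to this subspace I may assume without loss of generality that $\ket\psi$ has full Schmidt rank and argue with genuine operator identities. Under this assumption, Lemma~\ref{lem:state-preserve-implies-identity} promotes each of the four non-dashed pentagram equations to an operator identity on $\mathcal H_A$:
\begin{align*}
A_1 A_3 A_6 A_9 = I, \quad A_1 A_4 A_7 A_{10} = I, \quad A_2 A_6 A_8 A_{10} = I, \quad A_5 A_7 A_8 A_9 = I,
\end{align*}
and Theorem~\ref{thm:operator-sol}(b) tells us the four observables appearing in each such equation commute pairwise.

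The core step is then, for each $i\in\{1,6,7,8,9,10\}$, to rewrite $E = A_2A_3A_4A_5$ as a product of observables each of which shares an equation with $A_i$ (and so commutes with $A_i$), whence $\comm{E, A_i}=0$ as operators on $\Supp_A(\ket\psi)$. The crucial structural feature of the pentagram is that every variable lies in exactly two of the five equations, so any factor of $E$ that fails to commute with $A_i$ can be substituted out via one of the four identities above, and the extraneous operators produced by the substitution either lie in an equation with $A_i$ or cancel against each other using $A_k^2 = I$ together with intra-equation commutativity. For $i=6$, for instance, the non-commuting factors of $E$ are $A_4$ and $A_5$; using $A_4 = A_1 A_7 A_{10}$ and $A_5 = A_7 A_8 A_9$ and cancelling the two $A_7$'s via their commutativity inside the second equation yields $A_4 A_5 = A_1 A_{10} A_8 A_9$ and hence $E = A_2 A_3 A_1 A_{10} A_8 A_9$, every factor of which commutes with $A_6$. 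Analogous rewritings deliver $E = A_8 A_{10} A_1 A_9 A_4 A_5$ for $i=7$, $E = A_2 A_6 A_9 A_7 A_{10} A_5$ for $i=8$, and similar expressions for $i\in\{1,9,10\}$.

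The main obstacle is producing these six rewritings of $E$ correctly: each target $A_i$ demands a slightly different substitution, and one must track pairwise commutativity throughout. This is routine but bookkeeping-heavy, and the argument truly depends on the two-equations-per-variable property of the pentagram, which ensures that substituting out a non-commuting operator never cascades into further non-commuting operators. The final step, removing the full Schmidt rank assumption, is identical to the closing paragraph of Lemma~\ref{lem:glue-ms-commutation}: since $\squeeze\psi{A_j\tensor B_j} = 1$ and $\norm{A_j\tensor B_j}\leq 1$ one has $(A_j\tensor B_j)\ket\psi = \ket\psi$, so $\Span\{\ket{v_i}\}$ is invariant under every $A_j$ and one can apply the full-rank argument to the restricted operators to conclude $\comm{E, A_i}\tensor I_B\ket\psi = 0$.
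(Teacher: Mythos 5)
Your proposal is correct and follows the same framework as the paper: reduce to the full Schmidt rank case, use Lemma~\ref{lem:state-preserve-implies-identity} to turn the four solid pentagram constraints into operator identities, and deploy Theorem~\ref{thm:operator-sol}(b) for intra-constraint commutativity. The rewritings you give for $i = 6, 7, 8$ all check out, and analogous ones for $i \in \{1, 9, 10\}$ do exist (e.g.\ $E = A_6A_{10}A_7A_9A_3A_4$ handles $i=1$). The paper's proof is a touch more economical: it computes $E = A_4A_9A_4A_9$ once, reads off $[E,A_1]=[E,A_7]=0$, obtains $[E,A_6]=0$ from the symmetric form $E = A_3A_{10}A_3A_{10}$, and then bootstraps $[E,A_9]$, $[E,A_{10}]$, $[E,A_8]$ by writing $A_9 = A_1A_3A_6$, $A_{10}=A_1A_4A_7$, $A_8=A_5A_7A_9$ as products of observables already known to commute with $E$; but the underlying substitution-and-cancellation mechanics are the same as yours.
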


\begin{proof}
We proceed as in Lemma \ref{lem:glue-ms-commutation}, and assume that the state has full Schmidt rank; therefore the initial observations of Lemma \ref{lem:glue-ms-commutation} also hold here. In particular, every product of operators along a straight line in Figure \ref{subfig:magic-pentagram-game} is identity, and we will make use of this fact repeatedly. Observe initially that $A_2$, $A_3$, $A_4$ and $A_5$ each commute with $E$ (otherwise the strategy wouldn't be well-defined), and therefore we need only prove the statement for $i \in \{1,6,7,8,9,10\}$. We initially find that
\begin{align*}
A_2A_5 &= A_6A_8A_{10}A_7A_8A_9\\
&= A_8^2A_6A_{10}A_7A_9\\
&= A_6(A_{10}A_7)A_9\\
&= A_6(A_1A_4)A_9\\
&= A_3A_9A_4A_9,
\end{align*}
where we in the second equality have used that $A_8$ commutes with all operators present in the expression, and otherwise the relations forcing certain products to be identity (i.e. $A_1A_4A_7A_{10} = I$ and the factors pairwisely commute, so $A_{10}A_7 = A_1A_4$). Now, we therefore have
\begin{align*}
E = A_2A_3A_4A_5 &= A_3A_4A_2A_5\\
&= A_3A_4A_1^2A_2A_5\\
&= A_1A_3A_4 A_1\overbrace{A_2A_5}^{\mathclap{=A_3A_9A_4A_9}}\\
&= A_1A_3A_4\overbrace{A_1A_3A_9}^{=A_6}A_4A_9\\
&= A_4A_9A_4A_9,
\end{align*}
so in particular, as $A_1$ commutes with $A_4$ and $A_9$, $\comm{E, A_1} = 0$. But the same holds true for $A_7$, so $\comm{E, A_7} = 0$. Symmetrically, we can show $E = A_3A_{10}A_3A_{10}$, and since $A_3$ and $A_{10}$ each commute with $A_6$ (as they share a constraint), $\comm{E, A_6} = 0$.

Now, as $A_9 = A_1A_3A_6$, and each factor has been found to commute with $E$, $\comm{E, A_9} = 0$. Symmetrically, $\comm{E, A_{10}} = 0$. Finally, $A_8 = A_5A_7A_9$, each of which commute with $E$, so $\comm{E, A_8} = 0$.
\end{proof}

\subsection{Robustness}
\label{sec:robustness}

In this appendix we sketch the arguments required to show that the Glued Magic Square in fact \emph{robustly} self-tests the state $\ket{\psi_4}$. In particular, while we do not define robust self-testing explicitly, we show that just as in Theorem \ref{thm:gms-convex-self-test} one can from a strategy winning the Glued Magic Square game with probability $1-\varepsilon$ extract strategies winning the Magic Square game with probability $1-O(\varepsilon)$ (note, though, that one of these strategies can be zero just as in the perfect case). By using robustness of the Magic Square one can then approximately extract $\ket{\psi_4}$ from each of the two substrategies.

We start out by proving a few lemmas that will enable us to show this result. Note that these are not necessarily novel -- a result similar to Lemma \ref{lem:approx-cycling} appears in \cite{cs-robust-lcs}.
\begin{lemma}
\label{lem:approx-lrmul}
Suppose $\ket\psi \in \mathcal H$ is a state, and $A,B \in \mathcal B(\mathcal H)$ are unitary operators satisfying $\Re\squeeze\psi A \geq 1 - \varepsilon$ and $\Re\squeeze\psi B \geq 1 - \delta$ for some $\varepsilon, \delta \geq 0$. Then,\begin{align*}
    \Re\squeeze\psi{AB} \geq 1 - \left(\sqrt\varepsilon+\sqrt\delta\right)^2.
\end{align*} Moreover, if $\squeeze\psi A$ and $\squeeze\psi B$ are real, then so is $\squeeze\psi{AB}$.
\end{lemma}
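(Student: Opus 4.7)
The plan is to translate the real-part inequalities into norm-distance bounds, combine them by a triangle inequality, and translate back to a real-part inequality for the product.

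The key identity is that for any unitary $U$ and any unit vector $\ket\psi$,
\begin{align*}
\norm{U\ket\psi - \ket\psi}^2 = \norm{U\ket\psi}^2 + \norm{\ket\psi}^2 - 2\Re\squeeze\psi{U} = 2 - 2\Re\squeeze\psi{U},
\end{align*}
using $\norm{U\ket\psi} = 1$. Applied to $U=A$ and $U=B$, the hypotheses immediately yield $\norm{A\ket\psi - \ket\psi} \leq \sqrt{2\varepsilon}$ and $\norm{B\ket\psi - \ket\psi} \leq \sqrt{2\delta}$. Next, I would exploit that $A$ is an isometry to rewrite $\norm{AB\ket\psi - A\ket\psi} = \norm{B\ket\psi - \ket\psi}$; the triangle inequality then gives
\begin{align*}
\norm{AB\ket\psi - \ket\psi} \leq \norm{AB\ket\psi - A\ket\psi} + \norm{A\ket\psi - \ket\psi} \leq \sqrt{2\delta} + \sqrt{2\varepsilon} = \sqrt 2\left(\sqrt\varepsilon + \sqrt\delta\right).
\end{align*}
Squaring and applying the identity once more to the unitary $AB$ converts this into $2 - 2\Re\squeeze\psi{AB} \leq 2(\sqrt\varepsilon + \sqrt\delta)^2$, which rearranges to the claimed bound. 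These steps are all elementary and I do not expect them to pose any difficulty.

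For the moreover clause, my starting point would be the equivalence $\squeeze\psi{A} \in \mathbb R \iff \squeeze\psi{A} = \squeeze\psi{A^*}$, and similarly for $B$. I expect this to be the main obstacle: reality of the two individual expectations is not in general sufficient to force reality of $\squeeze\psi{AB}$ for arbitrary unitaries, so I anticipate needing to exploit additional structure available in the intended context (for instance when $A$ and $B$ are observables, or when the products one cares about commute on the support of $\ket\psi$, as is the case for the glued-game applications downstream). In such a setting one would verify directly that $\squeeze\psi{AB} = \overline{\squeeze\psi{AB}}$ by reducing to $\squeeze\psi{AB} = \squeeze\psi{B^*A^*}$ and invoking the structural hypothesis, whereas the inequality portion is completely general and carries over verbatim.
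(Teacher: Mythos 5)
Your proof of the inequality is correct and gives the sharp constant, via a slightly different route than the paper's. Both arguments rest on the identity $\norm{(U-I)\ket\psi}^2 = 2 - 2\Re\squeeze\psi U$; the paper then expands $\squeeze\psi{(A-I)(B-I)}$ and applies Cauchy--Schwarz, whereas you pass through the intermediate vector $A\ket\psi$, use that $A$ is an isometry, and apply the triangle inequality. The two derivations produce the same bound and are of comparable length; yours is a little more geometric in flavour.

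Your caution about the ``moreover'' clause is well placed, and in fact goes further than the paper does: the paper's proof never addresses that clause, and as stated it is false for general unitaries. For a counterexample even in the regime where $\varepsilon,\delta$ are small, take $\mathcal H = \mathbb C^3$, $\ket\psi = (\cos\theta,\sin\theta,0)^T$ for small $\theta>0$, let $A$ fix the first coordinate and swap the other two, and let $B$ fix the first coordinate and act as $\pauli y$ on the remaining two. Then $\squeeze\psi A = \squeeze\psi B = \cos^2\theta$ are both real and close to $1$, but $\squeeze\psi{AB} = \cos^2\theta + i\sin^2\theta$ is not real. In the paper's own argument only the real part of $\squeeze\psi{AB}$ is controlled, via $\squeeze\psi{(A-I)(B-I)} = \squeeze\psi{AB} + 1 - \squeeze\psi A - \squeeze\psi B$, and $\squeeze\psi{(A-I)(B-I)}$ need not be real. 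The clause should therefore either be dropped, or replaced by the kind of structural hypothesis you anticipated (for instance, that $AB$ is itself an observable, or that all such expectations in the downstream lemmas are read as real parts, which is how they are in fact used).
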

\begin{proof}
Observe that $\norm{(A-I)\ket\psi}^2 = 2-2\Re\squeeze\psi A \leq 2\varepsilon$, and similarly $\norm{(B-I)\ket\psi}^2 \leq 2\delta$. Thus, using Cauchy-Schwarz inequality,
\begin{equation*}
|\Re\squeeze\psi{(A-I)(B-I)}| \leq \left|\squeeze\psi{(A-I)(B-I)}\right|
\leq \norm{(A-I)\ket\psi}\cdot\norm{(B-I)\ket\psi}
\leq 2\sqrt{\varepsilon\delta}
\end{equation*}
Since, $\squeeze\psi{(A-I)(B-I)} = \squeeze\psi{AB}+\squeeze\psi I-\squeeze\psi A-\squeeze\psi B,$ upon rearranging and taking the real value yields
\begin{align*}
\Re\squeeze\psi{AB} &= \Re\squeeze\psi A + \Re\squeeze\psi B - 1 + \Re\squeeze\psi{(A-I)(B-I)} \\
&\geq 1 - \varepsilon + 1 - \delta - 1 - 2\sqrt{\varepsilon\delta} = 1 - \left(\sqrt\varepsilon+\sqrt\delta\right)^2,
\end{align*} as required.
\end{proof}

Observe that if $U_1, \dots, U_n$ are all unitaries satisfying $\Re\squeeze\psi{U_i} \geq 1 - \varepsilon$, then repeatedly applying Lemma \ref{lem:approx-lrmul} we obtain $\Re\squeeze\psi{U_1U_2\cdots U_n} \geq 1-n^2\varepsilon$. In general, one can also note that if $\Re\squeeze\psi{U_i} \geq 1 - C_i\varepsilon$ for some constants $C_i \geq 0$, then there is a constant $C \geq 0$ such that $\Re\squeeze\psi{U_1U_2\cdots U_n} \geq 1-C\varepsilon$.

\begin{lemma}
\label{lem:approx-cycling}
Suppose $\ket\psi\in\mathcal H_A \tensor \mathcal H_B$ is a state and $A, U \in \mathcal B(\mathcal H_A)$ and $B \in \mathcal B(\mathcal H_B)$ are all unitary operators. Furthermore suppose there exists some $\varepsilon \geq 0$ such that $\squeeze\psi{(U\tensor I)} \geq 1 - \varepsilon$ and $\squeeze\psi{(A\tensor B)} \geq 1 - \varepsilon$. Then it holds that $\squeeze\psi{(AUA^*\tensor I)} \geq 1-9\varepsilon$.
\end{lemma}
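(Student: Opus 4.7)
The plan is to exploit the algebraic identity
\[
(A\tensor B)(U\tensor I)(A^*\tensor B^*) \;=\; AUA^*\tensor BB^* \;=\; AUA^*\tensor I,
\]
which follows from factoring tensor products and the fact that $B$ is unitary. Consequently the target quantity equals
\[
\squeeze\psi{AUA^*\tensor I} = \squeeze\psi{(A\tensor B)(U\tensor I)(A^*\tensor B^*)},
\]
and the task reduces to estimating the expectation of a product of three unitaries on $\mathcal H_A\tensor \mathcal H_B$. The natural tool for such an estimate is Lemma \ref{lem:approx-lrmul}, which I intend to apply twice along this product.

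Before invoking the lemma I would check that each of the three factors has real expectation at least $1-\varepsilon$ on $\ket\psi$. The inner products $\squeeze\psi{U\tensor I}$ and $\squeeze\psi{A\tensor B}$ are real by hypothesis (they exceed the real value $1-\varepsilon$), and taking complex conjugates yields
\[
\squeeze\psi{A^*\tensor B^*} = \overline{\squeeze\psi{A\tensor B}} = \squeeze\psi{A\tensor B} \geq 1-\varepsilon.
\]
All three tensor-product operators are unitary, so the hypotheses of Lemma \ref{lem:approx-lrmul} are met.

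Now I would apply Lemma \ref{lem:approx-lrmul} to $A\tensor B$ and $U\tensor I$, both with parameter $\varepsilon$, to obtain
\[
\Re\squeeze\psi{(A\tensor B)(U\tensor I)} \geq 1 - (\sqrt\varepsilon+\sqrt\varepsilon)^2 = 1-4\varepsilon,
\]
and by the ``moreover'' clause this expectation remains real. A second application, pairing the combined operator (with effective parameter $4\varepsilon$) with $A^*\tensor B^*$ (with parameter $\varepsilon$), gives
\[
\Re\squeeze\psi{(A\tensor B)(U\tensor I)(A^*\tensor B^*)} \geq 1 - (\sqrt{4\varepsilon}+\sqrt\varepsilon)^2 = 1-9\varepsilon.
\]
Combining with the opening identity yields $\squeeze\psi{AUA^*\tensor I} \geq 1-9\varepsilon$, as required. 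There is no real obstacle here: the only substantive observation is the cancellation $BB^*=I$ that lets one insert the auxiliary factor $B^*$ next to $A^*$ so as to be able to use the approximate-invariance hypothesis on Alice's observable through Bob's side; once that identity is spotted, the rest is a two-fold invocation of Lemma \ref{lem:approx-lrmul} with the constants adding up cleanly to $9\varepsilon$.
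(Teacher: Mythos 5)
Your proof is correct and follows essentially the same route as the paper's: you spot the conjugation identity $\squeeze\psi{(AUA^*\tensor I)} = \squeeze\psi{(A\tensor B)(U\tensor I)(A\tensor B)^*}$, note that the adjoint $A^*\tensor B^*$ inherits the same real expectation bound, and apply Lemma~\ref{lem:approx-lrmul} twice, with the constants composing to $(\sqrt{4\varepsilon}+\sqrt\varepsilon)^2 = 9\varepsilon$. Your write-up is simply more explicit than the paper's one-line proof about the intermediate constant $4\varepsilon$ and the invocation of the ``moreover'' clause to keep the expectations real throughout.
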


\begin{proof}
Observe that $\squeeze\psi{(A\tensor B)} = \squeeze\psi{(A\tensor B)^*}$ and $
\squeeze\psi{(AUA^* \tensor I)} = \squeeze\psi{(A\tensor B)(U\tensor I)(A\tensor B)^*},$ so applying Lemma \ref{lem:approx-lrmul} twice yields the desired conclusion.
\end{proof}
Note that as before, if $\squeeze\psi{(U\tensor I)} \geq 1 - O(\varepsilon)$ and $\squeeze\psi{(A\tensor B)} \geq 1 - O(\varepsilon)$, then this lemma implies $\squeeze\psi{(AUA^*\tensor I)} \geq 1-O(\varepsilon)$.

These two results forms the main ingredients in proving robustness. In the ideal case, we were able to prove that restricting operators to certain eigenspaces was well-defined, but that is somewhat more difficult to do robustly. We will need a way to translate approximate commutation for our entire quantum strategy into an approximate commutation on some eigenspace. However, to do this we first need to show that if we restrict to some space then approximate relations are still conserved.
\markold{
\begin{lemma}
\label{lem:approx-identity-survives-decomposition}
Suppose $\ket\psi \in \mathcal H_A \tensor \mathcal H_B$ is a unit vector, $A \in \mathcal B(\mathcal H_A)$ is a self-adjoint operator of norm at most 1, and $\squeeze\psi{(A\tensor I)} \geq 1 - \varepsilon$. Then for any decomposition $\ket\psi = \ket\varphi \dsum \ket\phi$ there exists a constant $C > 0$ such that $\squeeze\varphi{(A\tensor I)} \geq \norm{\ket\varphi}^2 - C\varepsilon$.
\end{lemma}

\begin{proof}
Schmidt decompose $\ket\psi = \sum_{i=1}^k\lambda_i\ket{v_i}\ket{w_i}$ with $\lambda_i>0$ for all $i\in[k]$. Observe that then $\squeeze{v_i}A \geq 1-\frac\varepsilon{\lambda_i^2}$ for all $i\in [k]$; since otherwise, if this failed for some $\ell \in [k]$, then
\begin{align*}
\squeeze{\psi}{(A\tensor I)} &= \sum_{i=1}^k \lambda_i^2\squeeze{v_i}A = \lambda_\ell^2\squeeze{v_\ell}A + \sum_{\overset{i=1}{i\neq\ell}}^k\lambda_i^2\squeeze{v_i}A\\
&< \lambda_\ell^2\left(1-\frac\varepsilon{\lambda_\ell^2}\right) + 1-\lambda_\ell^2 = 1 - \varepsilon,
\end{align*} which is a contradiction. Note that we used the assumption $\norm{A}\leq 1$ to bound the sum in the third line by $1-\lambda_\ell^2$.

Now as $\ket\psi = \ket\varphi \dsum \ket\phi$, there exists coefficients $\{\alpha_i\}_{i=1}^k \subseteq [0, 1]$ such that $\ket\varphi = \sum_{i=1}^k\alpha_i\lambda_i\ket{v_i}\ket{w_i}$ (these can be obtained by projecting $\lambda_i\ket{v_i}\ket{w_i}$ onto the direct summand in which $\ket\varphi$ lies). By orthonormality one has $\norm{\ket\varphi}^2 = \sum_{i=1}^k\alpha_i^2\lambda_i^2$, using which we find
\begin{equation*}
\squeeze\varphi A = \sum_{i=1}^k\alpha_i^2\lambda_i^2\squeeze{v_i} A \geq \sum_{i=1}^k\alpha_i^2\lambda_i^2\left(1-\frac{\varepsilon}{\lambda_i^2}\right) = \norm{\ket\varphi}^2 - \overbrace{\sum_{i=1}^k\alpha_i^2}^{=: C}\varepsilon,
\end{equation*}
which finishes the proof.
\end{proof}
}
\shinynew{
\begin{lemma}
\label{lem:approx-identity-survives-decomposition}
Suppose $\ket\psi \in \mathcal H_A \tensor \mathcal H_B$ is a unit vector, $A \in \mathcal B(\mathcal H_A)$ is a self-adjoint operator of norm at most 1, and $\squeeze\psi{(A\tensor I)} \geq 1 - \varepsilon$ for some $\varepsilon \geq 0$. Then for any decomposition $\ket\psi = \ket\varphi \dsum \ket\phi$ it holds that $\squeeze\varphi{(A\tensor I)} \geq \norm{\ket\varphi}^2 - \varepsilon$.
\end{lemma}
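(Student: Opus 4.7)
The plan is to reduce the claim to the vanishing of a single cross term in an orthogonal expansion of $\squeeze\psi{(A\tensor I)}$, then close with an operator-norm bound. The key observation about the statement is that the notation $\ket\psi = \ket\varphi \dsum \ket\phi$, as used throughout the paper, refers to a bipartite direct sum: there are orthogonal decompositions $\mathcal H_A = \mathcal H_A^\varphi \dsum \mathcal H_A^\phi \dsum \cdots$ and $\mathcal H_B = \mathcal H_B^\varphi \dsum \mathcal H_B^\phi \dsum \cdots$ such that $\ket\varphi \in \mathcal H_A^\varphi \tensor \mathcal H_B^\varphi$ and $\ket\phi \in \mathcal H_A^\phi \tensor \mathcal H_B^\phi$. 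This is crucial; without it the claim actually fails, as one can see by projecting $\ket{++}\ket 0$ on the first qubit under $A = \sigma_x\tensor I$.

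Granting this bipartite structure, the first step is to expand
\[\squeeze\psi{(A\tensor I)} = \squeeze\varphi{(A\tensor I)} + \squeeze\phi{(A\tensor I)} + 2\Re\bra\varphi(A\tensor I)\ket\phi\]
and observe that the cross term vanishes. Indeed, since $A$ acts trivially on $\mathcal H_B$, the vector $(A\tensor I)\ket\phi$ still lies in $\mathcal H_A \tensor \mathcal H_B^\phi$, which is orthogonal to $\mathcal H_A \tensor \mathcal H_B^\varphi \ni \ket\varphi$ because $\mathcal H_B^\varphi \perp \mathcal H_B^\phi$. So the identity reduces to $\squeeze\psi{(A\tensor I)} = \squeeze\varphi{(A\tensor I)} + \squeeze\phi{(A\tensor I)}$.

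The second step is to bound the $\ket\phi$ piece. Since $\norm{A\tensor I} \leq 1$, Cauchy--Schwarz yields $|\squeeze\phi{(A\tensor I)}| \leq \norm{(A\tensor I)\ket\phi}\cdot\norm{\ket\phi} \leq \norm{\ket\phi}^2$. Combined with the orthogonality $\norm{\ket\phi}^2 = 1 - \norm{\ket\varphi}^2$ (from $\ket\varphi \perp \ket\phi$ and $\norm{\ket\psi} = 1$) and the hypothesis, this gives
\[\squeeze\varphi{(A\tensor I)} \geq \squeeze\psi{(A\tensor I)} - \norm{\ket\phi}^2 \geq (1-\varepsilon) - \bigl(1 - \norm{\ket\varphi}^2\bigr) = \norm{\ket\varphi}^2 - \varepsilon,\]
as required.

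The essentially only subtlety is the vanishing of the cross term, and it is really a consequence of interpreting $\dsum$ as a bipartite decomposition on both sides, not merely an orthogonal splitting of $\ket\psi$. Once that is made explicit the proof is a two-line computation, and the improvement over the markold version (the disappearance of the constant $C$) comes exactly from not having to pass through the Schmidt decomposition.
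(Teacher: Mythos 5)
Your proof is correct, and it takes a genuinely different route from the paper's. The paper Schmidt-decomposes $\ket\psi = \sum_i\lambda_i\ket{v_i}\ket{w_i}$, writes $\ket\varphi = \sum_i\alpha_i\lambda_i\ket{v_i}\ket{w_i}$ for some coefficients $\alpha_i \in [0,1]$, introduces slack parameters $c_i$ via $\squeeze{v_i}{A} = 1 - c_i\varepsilon$, and derives the bound from $\sum_i\lambda_i^2 c_i \leq 1$. You avoid the Schmidt decomposition altogether: you split $\squeeze\psi{(A\tensor I)}$ into $\squeeze\varphi{(A\tensor I)} + \squeeze\phi{(A\tensor I)} + 2\Re\bra\varphi(A\tensor I)\ket\phi$, kill the cross term by orthogonality on the $\mathcal H_B$ side, and finish with the crude operator-norm bound $\squeeze\phi{(A\tensor I)} \leq \norm{\ket\phi}^2 = 1 - \norm{\ket\varphi}^2$. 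Both arguments rest on the same unstated structural hypothesis — that the direct-sum splitting $\ket\psi = \ket\varphi \dsum \ket\phi$ is a \emph{bipartite} one, so that it is compatible with a Schmidt decomposition of $\ket\psi$ — and your $\ket{++}\ket{0}$ counterexample makes explicit that this hypothesis is not optional, which is a clarification the paper does not offer. (The paper's proof smuggles the assumption in at the step ``these can be obtained by projecting $\lambda_i\ket{v_i}\ket{w_i}$ onto the direct summand in which $\ket\varphi$ lies,'' since that projection is only a scalar multiple of $\lambda_i\ket{v_i}\ket{w_i}$ when each Schmidt vector is an eigenvector of the projector, which forces $\alpha_i \in \{0,1\}$.) One small inaccuracy in your closing remark: you attribute the disappearance of the constant $C$ relative to the earlier draft to bypassing the Schmidt decomposition, but the paper's revised proof eliminates $C$ while still working in a Schmidt basis; the actual improvement there comes from the cleaner estimate $\sum_i\alpha_i^2\lambda_i^2 c_i \leq \sum_i\lambda_i^2 c_i \leq 1$ rather than the cruder per-term bound $\squeeze{v_i}{A} \geq 1 - \varepsilon/\lambda_i^2$.
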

\begin{proof}
Schmidt decompose $\ket\psi = \sum_i\lambda_i\ket{v_i}\ket{w_i}$ with $\lambda_i>0$ for all $i$, and furthermore for each $i$, find $c_i \geq 0$ such that $1-c_i\varepsilon = \squeeze{v_i}A$ (this is always possible as the right-hand side is bounded by 1). Then we find
\begin{equation*}
1-\varepsilon \leq \squeeze\psi{(A\tensor I)} = \sum_{i,j}\lambda_i\lambda_j\bra{v_j}A\ket{v_i}\bra{w_j}I\ket{w_i} = \sum_i\lambda_i^2\squeeze{v_i}A = \sum_i\lambda_i^2\left(1-c_i\varepsilon\right),
\end{equation*}
and using that $\sum_i\lambda_i^2 = 1$, we can subtract this from the above to obtain $\sum_i\lambda_i^2c_i \leq 1$.

Now as $\ket\psi = \ket\varphi \dsum \ket\phi$, there exists coefficients $\{\alpha_i\}_i \subseteq [0, 1]$ such that $\ket\varphi = \sum_i\alpha_i\lambda_i\ket{v_i}\ket{w_i}$ (these can be obtained by projecting $\lambda_i\ket{v_i}\ket{w_i}$ onto the direct summand in which $\ket\varphi$ lies). By orthonormality one has $\norm{\ket\varphi}^2 = \sum_i\alpha_i^2\lambda_i^2$, using which we find
\begin{equation*}
\squeeze\varphi A = \sum_i\alpha_i^2\lambda_i^2\squeeze{v_i} A = \sum_i \alpha_i^2\lambda_i^2(1-c_i\varepsilon) = \norm{\ket\varphi}^2 - \varepsilon\sum_i\alpha_i^2\lambda_i^2c_i\geq \norm{\ket\varphi}^2 - \varepsilon,
\end{equation*}
where we have used that $\alpha_i^2 \leq 1$ to bound $\sum_i\alpha_i^2\lambda_i^2c_i \leq \sum_i\lambda_i^2c_i \leq 1$.
\end{proof}
}

Now recall that our proof of self-testing in the exact case relied upon restriction to certain eigenspaces being well-defined. That is not true in the approximate case, but we can now show that if all observables approximately commute with some specific observable, then so do they do with its $-1$-eigenprojector.

\markold{
\begin{lemma}
\label{lem:approx-commutation-survives-decomposition}
Suppose $\ket\psi \in \mathcal H_A \tensor \mathcal H_B$ is a unit vector, $A, E \in \mathcal B(\mathcal H_A)$ are operators of norm at most 1, and $E$'s only eigenvalues are $\pm1$. Suppose that $A$ and $E$ approximately commute on $\ket\psi$ in the sense that $\squeeze\psi{(AEA^*E^*\tensor I)} \geq 1 - \varepsilon$ for some $\varepsilon \geq 0$. Then there exists some $C > 0$ such that $\squeeze\varphi{(AE^-A^*(E^-)^*\tensor I)} \geq 1-C\varepsilon$, where $\ket\varphi = \frac{(E^-\tensor I)\ket\psi}{\norm{(E^-\tensor I)\ket\psi}}$.
\end{lemma}

\begin{proof}
Note that by Lemma \ref{lem:approx-identity-survives-decomposition} there is some constant $C>0$ such that $\squeeze\varphi{(AEA^*E^*\tensor I)} \geq 1 - C\varepsilon$.
By using that $E = E^+-E^-$ and that $E^++E^- = I$, we find that $E = I-2E^-$, and therefore we get
\begin{align*}
1-C\varepsilon &\leq \squeeze\varphi{(AEA^*E^*\tensor I)}\\
&= \squeeze\varphi{(AIA^*E^*\tensor I)} - 2\squeeze\varphi{(AE^-A^*E^*\tensor I)}\\
&= 1-2\squeeze\varphi{((E^-)^*\tensor I)}-2\squeeze\varphi{(AE^-A^*\tensor I)} + 2\squeeze\varphi{(AE^-A^*(E^-)^*\tensor I)},
\end{align*}
and noting that $E^-$ is a projector, we find that $\squeeze\varphi{(AE^-A^*\tensor I)} \geq 0$, so rearranging the above, we get that
\begin{align*}
\squeeze\varphi{(AE^-A^*(E^-)^*\tensor I)} &\geq \squeeze\varphi{(E^-\tensor I)} + \squeeze\varphi{(AE^-A^*\tensor I)} - C\frac\varepsilon2\\
&\geq \squeeze\varphi{(E^-\tensor I)} - C\frac\varepsilon2 = 1-\frac C2\varepsilon,
\end{align*} as required.
\end{proof}
}

\shinynew{
\begin{lemma}
\label{lem:approx-commutation-survives-decomposition}
Suppose $\ket\psi \in \mathcal H_A \tensor \mathcal H_B$ is a unit vector, $A, E \in \mathcal B(\mathcal H_A)$ are operators of norm at most 1, that $A$ is an observable and $E$'s only eigenvalues are $\pm1$. Suppose that $A$ and $E$ approximately commute on $\ket\psi$ in the sense that $\squeeze\psi{(AEA^*E^*\tensor I)} \geq 1 - \varepsilon$ for some $\varepsilon \geq 0$. Then $\squeeze\varphi{(AE^-A^*(E^-)^*\tensor I)} \geq 1-\frac\varepsilon{\norm{(E^-\tensor I)\ket\psi}^2}$, where $\ket\varphi = \frac{(E^-\tensor I)\ket\psi}{\norm{(E^-\tensor I)\ket\psi}}$.
\end{lemma}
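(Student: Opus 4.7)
The plan is to first convert the near-identity expectation in the hypothesis into an approximate commutator bound on $\ket\psi$, and then transfer that bound to the normalised state $\ket\varphi$ via a Pythagorean identity on the unit vector $(A\tensor I_B)\ket\varphi$. Throughout, $A$ is an observable so $A^*=A$ and $A^2=I$, and $E$ must be self-adjoint with $E^2=I$ in order for $E^-$ to make sense as the spectral projection onto its $-1$-eigenspace; in particular the hypothesis reads $\squeeze\psi{AEAE\tensor I_B}\ge 1-\varepsilon$.

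The first step is the standard commutator bound. Expand $(AE-EA)^*(AE-EA)=2I-AEAE-EAEA$ and take the $\squeeze\psi{\cdot}$ expectation; since the real parts of $\squeeze\psi{AEAE\tensor I_B}$ and $\squeeze\psi{EAEA\tensor I_B}$ agree, the hypothesis gives $\norm{((AE-EA)\tensor I_B)\ket\psi}^2\le 2\varepsilon$. Substituting $E=I-2E^-$ yields $AE-EA=-2[A,E^-]$, hence $\norm{([A,E^-]\tensor I_B)\ket\psi}^2\le \varepsilon/2$.

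The second step is the Pythagorean reduction. Since $(E^-\tensor I_B)\ket\varphi=\ket\varphi$, the target equals $\squeeze\varphi{AE^-AE^-\tensor I_B}=\squeeze\varphi{AE^-A\tensor I_B}=\norm{(E^-\tensor I_B)(A\tensor I_B)\ket\varphi}^2$. Because $A$ is unitary, $(A\tensor I_B)\ket\varphi$ is a unit vector, so by Pythagoras this equals $1-\norm{((I-E^-)A\tensor I_B)\ket\varphi}^2$. Using $AE^-=E^-A+[A,E^-]$ together with $(I-E^-)E^-=0$ gives $(I-E^-)AE^-=(I-E^-)[A,E^-]$; combining with $\ket\varphi=\tfrac{1}{\sqrt p}(E^-\tensor I_B)\ket\psi$ and $\norm{I-E^-}\le 1$, we obtain
\begin{equation*}
\norm{((I-E^-)A\tensor I_B)\ket\varphi}^2 = \tfrac{1}{p}\norm{((I-E^-)[A,E^-]\tensor I_B)\ket\psi}^2 \le \tfrac{\varepsilon}{2p}.
\end{equation*}
Hence $\squeeze\varphi{AE^-AE^-\tensor I_B}\ge 1-\varepsilon/(2p)\ge 1-\varepsilon/p$, as claimed.

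There is no real obstacle: the proof is essentially a single commutator estimate followed by a Pythagorean decomposition. The main sources of care are bookkeeping the tensor factor $I_B$ through the manipulations, and remembering that since $AEAE$ is not self-adjoint in general the hypothesis must be read as a real-part bound. A pleasant byproduct is that the argument delivers the slightly sharper constant $\varepsilon/(2p)$ in place of the stated $\varepsilon/p$.
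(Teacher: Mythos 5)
Your proof is correct, and it takes a genuinely different route from the paper's. The paper restricts the hypothesis to $\ket\varphi$ via Lemma~\ref{lem:approx-identity-survives-decomposition}, then expands $AEA^*E^* = I - 2E^- - 2AE^-A^* + 4AE^-A^*(E^-)^*$ and bounds the cross term $\squeeze\varphi{AE^-A^*\tensor I}$ by a separate argument involving the residual vector $\ket\eta = \ket\psi - (AEA^*E^*\tensor I)\ket\psi$. You instead bypass Lemma~\ref{lem:approx-identity-survives-decomposition} entirely: you convert the hypothesis into the commutator-norm estimate $\norm{([A,E^-]\tensor I)\ket\psi}^2 \le \varepsilon/2$ via the identity $(AE-EA)^*(AE-EA) = 2I - AEAE - EAEA$ and $AE - EA = -2[A,E^-]$, and then observe that the target equals $\norm{(E^-\tensor I)(A\tensor I)\ket\varphi}^2$, which by Pythagoras is $1 - \norm{((I-E^-)A\tensor I)\ket\varphi}^2$; the algebraic identity $(I-E^-)AE^- = (I-E^-)[A,E^-]$ then reduces the defect term directly to the commutator norm. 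Your route is cleaner and more self-contained, uses the observable structure of $A$ earlier and more explicitly, and yields the slightly sharper constant $\varepsilon/(2p)$. Your remark about reading the hypothesis as a real-part bound (since $AEA^*E^*$ is not self-adjoint) is also apt and applies equally to the paper's own argument, which implicitly treats $\braket{\psi|\eta}$ as real and nonnegative.
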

\begin{proof}
As a shorthand, define $\ket\phi := (E^-\tensor I)\ket\psi$, and note that $\ket\varphi = \frac{\ket\phi}{\norm{\ket\phi}}$. By Lemma \ref{lem:approx-identity-survives-decomposition} it holds that $\squeeze\varphi{(AEA^*E^*\tensor I)} \geq \frac{\norm{\ket\phi}^2 - \varepsilon}{\norm{\ket\phi}^2} = 1 - \frac1{\norm{\ket\phi}^2}\varepsilon$.
By using that $E = E^+-E^-$ and that $E^++E^- = I$, we find that $E = I-2E^-$, and therefore we get (by also using that $A$ is an observable and thus $AA^* = I$):
\begin{align*}
1-\frac\varepsilon{\norm{\ket\phi}^2} &\leq \squeeze\varphi{(AEA^*E^*\tensor I)}\\
&= \squeeze\varphi{(AIA^*E^*\tensor I)} - 2\squeeze\varphi{(AE^-A^*E^*\tensor I)}\\
&= 1-2\squeeze\varphi{((E^-)^*\tensor I)}-2\squeeze\varphi{(AE^-A^*\tensor I)} + 4\squeeze\varphi{(AE^-A^*(E^-)^*\tensor I)},\numberhere\label{eq:approx-commutation-proof-needs-bound}
\end{align*}
and noting that $E^-$ is an orthogonal projector (so $(E^-)^2 = E^- = (E^-)^*$), we find that $\squeeze\varphi{((E^-)^*\tensor I)} = 1$. Furthermore, we can write $(AEA^*E^*\tensor I)\ket\psi = \ket\psi - \ket\eta$, where by assumption this is greater than $1-\varepsilon$ when left-multiplying by $\bra\psi$, so we find $0 \leq \braket{\psi|\eta} \leq \varepsilon$. However, as $E^-$ is a projection, it therefore holds that $0 \leq \bra\psi (E^- \tensor I)\ket\eta \leq \varepsilon$. Combining all this, we thus find (using that $E^- = (E^-)^*$):
\begin{align*}
\squeeze\varphi{(AE^-A^*\tensor I)} &= \frac1{\norm{\ket\phi}^2}\squeeze\psi{(E^-AE^-A^*E^-\tensor I)}\\
&= \frac1{\norm{\ket\phi}^2}\bra\psi (E^- \tensor I)\left(\ket\psi - \ket\eta\right)\\
&= \frac1{\norm{\ket\phi}^2}\left(\bra\psi (E^- \tensor I)\ket\psi - \braket{\psi|\eta}\right)\\
&\geq 1-\frac\varepsilon{\norm{\ket\phi}^2}.
\end{align*}
Thus, we can rewrite \eqref{eq:approx-commutation-proof-needs-bound} and use the estimates to get a bound on our desired quantity:
\begin{align*}
4\squeeze\varphi{(AE^-A^*(E^-)^*\tensor I)} &\geq 2\squeeze\varphi{((E^-)^*\tensor I)} + 2\squeeze\varphi{(AE^-A^*\tensor I)} - \frac\varepsilon{\norm{\ket\phi}^2}\\
&\geq 2\cdot 1 + 2\cdot\left(1-\frac\varepsilon{\norm{\ket\phi}^2}\right) - \frac\varepsilon{\norm{\ket\phi}^2}\\
&= 4-3\frac\varepsilon{\norm{\ket\phi}^2},
\end{align*}
and dividing by 4, we obtain a slightly tighter bound than required.
\end{proof}
}

We are now ready to prove approximate commutation of $A_1, \dots, A_9$ with $E = A_3A_6A_9$ for a $(1-\varepsilon)$-optimal strategy for the Glued Magic Square game, where we will then later use Lemma \ref{lem:approx-identity-survives-decomposition} to show that it does in fact restrict to a $(1-O(\varepsilon)$-optimal strategy for the Magic Square game.
\begin{lemma}
\label{lem:glue-ms-approx-commutation}
Fix a strategy $\left(\ket\psi \in \mathcal H_A\tensor \mathcal H_B, \{A_x\}_{x}, \{B_y\}_{y}\right)$ for a glued pseudo-telepathy LCS with a Magic Square part achieving a winning probability of $1-\varepsilon$ for some $\varepsilon > 0$, and suppose the observables $A_1, \dots, A_9$ for Alice and $B_1, \dots, B_9$ correspond to the Magic Square part, as illustrated in Figure \ref{subfig:magic-square-game}. Then it holds that for $E := A_3A_6A_9$ and $F:=B_3B_6B_9$, we have
\begin{align*}
\squeeze\psi{(EA_iE^*A_i^*\tensor I_B)} = 1-O(\varepsilon), \quad \text{ and } \quad \squeeze\psi{(I_A \tensor FB_jF^*B_j^*)} = 1-O(\varepsilon)
\end{align*}
for all $i,j \in [9]$.
\end{lemma}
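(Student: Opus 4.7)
The plan is to transport the exact proof of Lemma \ref{lem:glue-ms-commutation} into the robust setting by systematically replacing every identity of the form ``$P$ acts as the identity on $\Supp_A\ket\psi$'' by the approximate statement $\squeeze\psi{(P \tensor I_B)} \geq 1-O(\varepsilon)$, and combining these via iterated applications of Lemma \ref{lem:approx-lrmul}. Since the exact proof consists of only a constant number of algebraic manipulations, the accumulated error remains $O(\varepsilon)$.

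First I would extract from the $(1-\varepsilon)$-winning hypothesis the approximate relations that are used exactly in Lemma \ref{lem:glue-ms-commutation}: for each row or column of the Magic Square part with parity $b_x$, the approximate constraint $\squeeze\psi{((-1)^{b_x}A_{i_1}A_{i_2}A_{i_3} \tensor I_B)} \geq 1-O(\varepsilon)$, and for each $i\in[9]$ the approximate consistency $\squeeze\psi{(A_i \tensor B_i)} \geq 1-O(\varepsilon)$, with constants absorbing the (uniform, finitely supported) input distribution. Approximate commutation $\squeeze\psi{(A_jA_kA_jA_k \tensor I_B)} \geq 1 - O(\varepsilon)$ for variables in a common equation then follows by the standard trick: from the constraint $A_jA_kA_l \approx \pm I$ one has $(A_jA_k \tensor I_B)\ket\psi \approx (\pm A_l \tensor I_B)\ket\psi$, and multiplying through by $A_jA_k \tensor I_B$ and re-applying the constraint gives $(A_jA_kA_jA_k \tensor I_B)\ket\psi \approx \ket\psi$ via two applications of Lemma \ref{lem:approx-lrmul}. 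In particular $A_3, A_6, A_9$ approximately pairwise commute on $\ket\psi$, and similarly for every pair of observables sharing a Magic Square equation.

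Next I would adapt the chain of substitutions from Lemma \ref{lem:glue-ms-commutation}. For $i\in\{3, 6, 9\}$, the claim $\squeeze\psi{(EA_iE^*A_i \tensor I_B)} \geq 1-O(\varepsilon)$ (recalling $A_i^*=A_i$ and $E^* = A_9A_6A_3$) is immediate from approximate pairwise commutation of $A_3, A_6, A_9$ with $A_i$ via a bounded number of applications of Lemma \ref{lem:approx-lrmul}. For $i\in\{1, 2, 4, 5, 7, 8\}$, the exact proof rewrote $A_3A_6A_8$ as the self-adjoint expression $A_2A_7A_2$ using a constant-length chain of constraint substitutions of the form $A_{j_1}A_{j_2}A_{j_3} \rightsquigarrow I$ together with intra-equation commutations. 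In the robust setting each such substitution inserts into $EA_iE^*A_i$ a unitary factor, either a constraint product or a commutation corrector $A_jA_kA_jA_k$, whose expectation on $\ket\psi$ differs from $1$ by $O(\varepsilon)$. Iterating Lemma \ref{lem:approx-lrmul} over these finitely many factors produces $\squeeze\psi{((A_3A_6A_8)(A_8A_6A_3) \tensor I_B)} \geq 1 - O(\varepsilon)$; combining this with approximate commutation of $A_9$ with $A_7, A_8$ (which share equations with it) upgrades this to the case $i=8$. The remaining values of $i$ are symmetric, and Bob's half follows by the identical argument applied to $F$.

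The main obstacle is the careful accounting of error: Lemma \ref{lem:approx-lrmul} in its stated form sends bounds $1-\varepsilon, 1-\delta$ to $1-(\sqrt\varepsilon+\sqrt\delta)^2$, so uneven combinations could in principle blow up the bound. As noted in the remark immediately following Lemma \ref{lem:approx-lrmul}, however, combining a fixed number of operators each satisfying $\squeeze\psi\cdot\geq 1 - O(\varepsilon)$ preserves the $1 - O(\varepsilon)$ bound, which is all that is required here since every manipulation in the exact proof translates into an $O(1)$-length product of such factors.
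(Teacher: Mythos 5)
Your high-level plan — robustify the chain of identities from Lemma~\ref{lem:glue-ms-commutation} by replacing each exact relation with a $1-O(\varepsilon)$ expectation bound and combining them — is exactly the paper's strategy. However, there is a genuine gap in the mechanism you describe for combining them. Lemma~\ref{lem:approx-lrmul} only lets you \emph{prepend or append} a high-expectation unitary to a product; it does not let you insert one into the interior. Nearly every substitution in the exact chain (e.g.\ rewriting $A_7A_2$ as $(A_1A_4)(A_5A_8)$ inside $A_2A_7A_2A_2A_7A_2$) requires precisely such an interior insertion. The paper handles this with Lemma~\ref{lem:approx-cycling}: one cyclically conjugates the expression so that the insertion point sits at an end, which is possible because the conjugating factor is approximately consistent with a Bob observable (this is where the approximate consistency $\squeeze\psi{(A_i\tensor B_i)}\geq 1-O(\varepsilon)$ is actually used). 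You correctly extract this consistency relation as an ingredient, but your derivation never invokes the cycling trick; as written, ``iterating Lemma~\ref{lem:approx-lrmul} over these finitely many factors'' does not produce the claimed bound.

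The same issue affects your derivation of approximate intra-equation commutation: ``multiplying through by $A_jA_k\tensor I_B$ and re-applying the constraint'' is not an application of Lemma~\ref{lem:approx-lrmul}, which operates on two expectation bounds against the \emph{same} fixed state $\ket\psi$; to change the state or to re-order $A_jA_kA_l$ into $A_lA_jA_k$ you again need the cycling/conjugation step. Once Lemma~\ref{lem:approx-cycling} is folded in at these points, the rest of your accounting (constant-length chain, $O(1)$ applications of the approximation lemmas, union bound over the finitely many questions giving the $1-O(\varepsilon)$ constraint bounds) matches the paper's proof.
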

\begin{proof}
The proof follows in mostly the same way as in that of Lemma \ref{lem:glue-ms-commutation}, except that we obviously cannot conclude that certain equalities hold. Instead, we will make use of Lemma \ref{lem:approx-lrmul} and \ref{lem:approx-cycling} to insert approximately satisfied relations and show that the resulting expression is still close to 1. Therefore, note that as Alice can get 11 possible questions and wins using her strategy along with Bob with probability $1-\varepsilon$, she correctly answers every question with probability at least $1-11\varepsilon$. This implies, for example, that $\squeeze\psi{(A_1A_2A_3 \tensor I)} = 1 - O(\varepsilon)$, and similarly for other products of operators along rows and columns, except for the column corresponding to the odd constraint. We therefore find:
\begin{align*}
1 = \squeeze\psi{I} &= \squeeze\psi{(A_2A_2\tensor I)}\\
&= \squeeze\psi{((A_2A_7A_7A_2)\tensor I)}\\
&= \squeeze\psi{((A_2A_7A_2A_2A_7A_2)\tensor I)}\\
&= \squeeze\psi{((A_2(A_1A_4)(A_5A_8)A_2(A_1A_4)(A_5A_8))\tensor I)}-O(\varepsilon)\\
&= \squeeze\psi{((A_3A_6A_8A_3A_6A_8)\tensor I)}-O(\varepsilon)\\
&= \squeeze\psi{((A_2A_1\;A_4A_5\;A_8\;A_2A_1\;A_4A_5\;A_8)\tensor I)}-O(\varepsilon)\\
&= \squeeze\psi{((A_3A_6A_9A_8A_3A_6A_9A_8)\tensor I)} - O(\varepsilon)\\
&= \squeeze\psi{((EA_8EA_8)\tensor I)} - O(\varepsilon)
\end{align*}
Note that in the first three lines we have merely used that $A_2$ and $A_7$ are observables, while we have inserted several relations using Lemma \ref{lem:approx-lrmul} between the third and fourth line. As Lemma \ref{lem:approx-lrmul} does not directly allow insertion inside a product but only at the two ends, we have used Lemma \ref{lem:approx-cycling} to cyclically shift the expression such that the place we want to insert a relation is at the beginning or end. 

Now note that the above expression is indeed what we want, as both $E$ and $A_8$ are self-adjoint. The proof of approximate commutation for $A_i$ with $E$ for $i \in \{1,2,4,5,7\}$ follow in a similar way (for $i\in\{3,6,9\}$ it holds exactly as this is required for the strategy to be well-defined), and so does the approximate commutation between $B_j$ and $F$, for $j \in [9]$.
\end{proof}
\markold{
\begin{theorem}
Suppose $\mathcal S = \left(\ket\psi \in \mathcal H_A\tensor \mathcal H_B, \{A_x\}_{x\in[18]}, \{B_y\}_{y\in[18]}\right)$ is a strategy winning the Glued Magic Square game with probability $1-\varepsilon$ for some $\varepsilon \geq 0$. Then by defining $E = A_3A_6A_9$, $G = B_3B_6B_9$ and $\ket{\varphi_1} := \frac{(E^-\tensor G^-)\ket\psi}{\norm{(E^-\tensor G^-)\ket\psi}}$, there exists a constant $C \geq 0$ such that the strategy $\mathcal S_1 = \left(\ket{\varphi_1}, \{E^-A_xE^-\}_{x\in[9]}, \{G^-B_yG^-\}_{y\in[9]}\right)$ wins the Magic Square game with probability $1-C\varepsilon$. Similarly, by defining $F = A_{10}A_{13}A_{16}$ and $H = B_{10}B_{13}B_{16}$ and $\ket{\varphi_2} := \frac{(F^-\tensor H^-)\ket\psi}{\norm{(F^-\tensor H^-)\ket\psi}}$, there is some constant $C' \geq 0$ such that $\mathcal S_2 = \left(\ket{\varphi_2}, \{F^-A_xF^-\}_{x\in[9]}, \{H^-B_yH^-\}_{y\in[9]}\right)$ wins the Magic Square game with probability $1-C'\varepsilon$.
\end{theorem}
\begin{proof}
Initially consider the state $\ket\eta = (E^-\tensor I)\ket\psi$, which is in general not normalised, and consider
\begin{align*}
\squeeze\eta{(E^-\tensor G^-)} &= \squeeze\eta{(E^-\tensor I)} - \squeeze\eta{(E^- \tensor G^+)}\\
&= \squeeze\psi{(E^-\tensor I)^3} - \squeeze\psi{(E^-\tensor I)(E^-\tensor G^+)(E^-\tensor I)}\\
&= \squeeze\psi{(E^-\tensor I)} - \squeeze\psi{(E^-\tensor G^+)}\\
&= \norm{(E^-\tensor I)\ket\psi}^2 - \squeeze\psi{(E^-\tensor G^+)}.
\end{align*}
Now note that $\squeeze\psi{(E^-\tensor G^+)}$ gives the probability that Alice's and Bob's are not consistent when Alice is asked for the odd constraint and Bob is asked to assign values to either of the topmost three variables. However, they are consistent with probability $1-O(\varepsilon)$, so we find $\squeeze\psi{(E^-\tensor G^+)} = O(\varepsilon)$. Therefore, by defining the normalised state $\ket\phi = \frac{\ket\eta}{\norm{\ket\eta}}$ we find by the above that
\begin{equation}
\label{eq:gms-robust-projector-size}
\squeeze{\phi}{(E^-\tensor G^-)} = \frac1{\norm{\ket\eta}^2}\left(\norm{(E^-\tensor I)\ket\eta}^2 - \squeeze\psi{(E^-\tensor G^+)}\right) = 1 - O(\varepsilon),
\end{equation}
which shows that $E^-\tensor G^-$ approximately acts like the identity operator on $(E^-\tensor I)\ket\psi$.

Note that for any $i \in [9]$ it holds by Lemma \ref{lem:glue-ms-approx-commutation} that $\squeeze\phi{((A_iEA_i^*E^*)\tensor I)} = 1 - O(\varepsilon)$. We can use this to show that $\mathcal S_1$ approximately satisfies the relations required by the Magic Square by initially using Lemma \ref{lem:approx-identity-survives-decomposition} to conclude that the relation approximately holding on $\ket\psi$ also does so on $\ket\phi$ (note that here $\ket\phi$ is normalised, while the lemma is formulated without that assumption).
\begin{align*}
1-O(\varepsilon) &= \squeeze\psi{((A_1A_2A_3)\tensor I)}\\
&= \squeeze\phi{((A_1A_2A_3)\tensor I)} - O(\varepsilon)\\
&= \squeeze\phi{((E^-\tensor I)^6(A_1A_2A_3)\tensor I)} - O(\varepsilon)\\
&= \squeeze\phi{((E^-A_1E^-E^-A_2E^-E^-A_3E^-)\tensor I)} - O(\varepsilon)
\end{align*}
Note that in the third equality we have merely used that $E^-\tensor I$ is idempotent, and that it preserves $\ket\phi$. In the fourth equality we have used Lemma \ref{lem:approx-lrmul} and \ref{lem:approx-cycling} to insert the commutation relations between $A_i$ and $E^-$ which we get from Lemma \ref{lem:glue-ms-approx-commutation} and use Lemma \ref{lem:approx-commutation-survives-decomposition} to restrict to $\ket\phi$.

Similarly one can show that the operators $E^-A_iE^-$ all $O(\varepsilon)$-approximately satisfy the relations required of the Magic Square game, when applying them to the state $\ket\phi$. However $\mathcal S_1$ uses another state, so we need to translate between those two. Therefore, suppose $R$ is a relation which is approximately satisfied for $\ket\phi$, i.e. that $\squeeze\phi R = 1-O(\varepsilon)$. Then by using Lemma \ref{lem:approx-lrmul}, we find that as $\squeeze\phi{E^-\tensor G^-} = 1-O(\varepsilon)$,
\begin{align*}
1-O(\varepsilon) &= \squeeze\phi{(E^-\tensor G^-)R(E^-\tensor G^-)}\\
&= {\norm{\ket\phi}^2} \cdot {\squeeze\psi{(E^-\tensor I)(E^-\tensor G^-)R(E^-\tensor G^-)(E^-\tensor I)}}\\
&= {\squeeze\psi{(E^-\tensor I)}} \cdot {\squeeze\psi{(E^-\tensor G^-)R(E^-\tensor G^-)}}\\
&= \frac{\squeeze\psi{(E^-\tensor I)}}{\squeeze\psi{(E^-\tensor G^-)}}\squeeze\psi{(E^-\tensor G^-)R(E^-\tensor G^-)},
\end{align*}
and by using \eqref{eq:gms-robust-projector-size}, we find that $\frac{\squeeze\psi{(E^-\tensor I)}}{\squeeze\psi{(E^-\tensor G^-)}} = 1-O(\varepsilon)$. This now implies that $\mathcal S_1$ is a $(1-O(\varepsilon))$-optimal strategy for the Glued Magic Square game, since we can perform similar proofs for Bob's strategy. Furthermore, the same proofs go through for $\mathcal S_2$, as long as $(F^-\tensor H^-) \ket\psi \neq 0$.
\end{proof}
}

\shinynew{
\begin{theorem}
Suppose $\mathcal S = \left(\ket\psi \in \mathcal H_A\tensor \mathcal H_B, \left\{A_i^{(x)}\;\middle|\; x \in X\right\}_{i\in A},\left\{B_j\right\}_{j\in B}\right)$ is a strategy winning the Glued Magic Square with probability $1-\varepsilon$ for some $\varepsilon \geq 0$. For each $i \in [9]$, fix an equation $x$ where the observable $A_i^{(x)}$ is used, and let $A_i := A_i^{(x)}$.
Then by defining $E = A_3A_6A_9$, $G = B_3B_6B_9$ and $\ket{\varphi_1} := \frac{(E^-\tensor G^-)\ket\psi}{\norm{(E^-\tensor G^-)\ket\psi}}$, there exists a constant $C \geq 0$ such that the strategy $\mathcal S_1 = \left(\ket{\varphi_1}, \{E^-A_xE^-\}_{x\in[9]}, \{G^-B_yG^-\}_{y\in[9]}\right)$ wins the Magic Square game with probability $1-\frac C{\norm{\ket{\varphi_1}}^2}\varepsilon$. Similarly, by defining $F = A_{10}A_{13}A_{16}$ and $H = B_{10}B_{13}B_{16}$ and $\ket{\varphi_2} := \frac{(F^-\tensor H^-)\ket\psi}{\norm{(F^-\tensor H^-)\ket\psi}}$, there is some constant $C' \geq 0$ such that $\mathcal S_2 = \left(\ket{\varphi_2}, \{F^-A_xF^-\}_{x\in[9]}, \{H^-B_yH^-\}_{y\in[9]}\right)$ wins the Magic Square game with probability $1-\frac{C'}{\norm{\ket{\varphi_1}}^2}\varepsilon$.
\end{theorem}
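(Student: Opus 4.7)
The plan is to closely parallel the exact-case argument from Theorem \ref{thm:gms-convex-self-test}, replacing each exact identity by its $O(\varepsilon)$ approximate counterpart while carefully tracking how the normalisation factor $\norm{(E^-\tensor G^-)\ket\psi}$ enters. The target bound on the winning probability of $\mathcal S_1$ would be verified constraint by constraint, by showing that every equation of the Magic Square game holds to precision $O(\varepsilon)/\norm{(E^-\tensor G^-)\ket\psi}^2$ on $\ket{\varphi_1}$.

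First I would establish that $E^-\tensor G^-$ acts approximately like the identity on $\ket{\varphi_1}$. Since Alice's answer to the odd column equation and Bob's answer to any variable in that column are consistent with probability $1-O(\varepsilon)$, we obtain $\squeeze\psi{(E^-\tensor G^+)} = O(\varepsilon)$ and symmetrically $\squeeze\psi{(E^+\tensor G^-)} = O(\varepsilon)$. Combining these with $\squeeze\psi{((EG)^+\tensor I)} = O(\varepsilon)$ yields $\norm{(E^-\tensor G^-)\ket\psi}^2 = \norm{(E^-\tensor I)\ket\psi}^2 - O(\varepsilon)$, and hence $\squeeze{\varphi_1}{E^-\tensor G^-} = 1 - O(\varepsilon)/\norm{(E^-\tensor G^-)\ket\psi}^2$. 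Second, for each $i\in[9]$, Lemma \ref{lem:glue-ms-approx-commutation} gives approximate commutation of $A_i$ with $E$ on $\ket\psi$; Lemma \ref{lem:approx-commutation-survives-decomposition} then lifts this to approximate commutation of $A_i$ with $E^-$ on $\ket{\varphi_1}$, with an error of the order $\varepsilon/\norm{(E^-\tensor I)\ket\psi}^2$. A symmetric argument is performed for Bob.

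Third, each Magic Square constraint (e.g.\ $A_1A_2A_3 = I$) holds on $\ket\psi$ to precision $O(\varepsilon)$ by the overall winning probability, and Lemma \ref{lem:approx-identity-survives-decomposition} pushes each such relation to $\ket{\varphi_1}$ with the same order of error divided by $\norm{(E^-\tensor G^-)\ket\psi}^2$. Using Lemma \ref{lem:approx-lrmul} together with the previous step, each factor $A_i$ inside such a product can be replaced by $E^-A_iE^-$ while preserving the approximation, so the restricted observables $\{E^-A_iE^-\}$ and $\{G^-B_jG^-\}$ satisfy the Magic Square relations on $\ket{\varphi_1}$ to order $\varepsilon/\norm{(E^-\tensor G^-)\ket\psi}^2$. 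Cycling via Lemma \ref{lem:approx-cycling} lets us insert relations anywhere in long products. The consistency conditions $\squeeze{\varphi_1}{(E^-A_iE^-)\tensor(G^-B_iG^-)} \geq 1 - O(\varepsilon)/\norm{(E^-\tensor G^-)\ket\psi}^2$ follow from the same device applied to $\squeeze\psi{A_i\tensor B_i} = 1 - O(\varepsilon)$. Combining these gives the claimed winning probability $1 - C\varepsilon/\norm{(E^-\tensor G^-)\ket\psi}^2$ for some absolute $C \geq 0$, and the argument for $\mathcal S_2$ using $F,H$ is completely symmetric.

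The main obstacle is the bookkeeping of error terms. Every invocation of Lemma \ref{lem:approx-identity-survives-decomposition} or Lemma \ref{lem:approx-commutation-survives-decomposition} introduces a factor of $1/\norm{(E^-\tensor G^-)\ket\psi}^2$, and one has to be careful not to compound these factors by working on the unnormalised vector $(E^-\tensor G^-)\ket\psi$ whenever possible, only dividing by its norm at the end of each independent calculation. Equally, the choice of $A_i := A_i^{(x)}$ depends on a fixed equation $x$, and one must verify that the approximate-commutation and approximate-relation arguments from Lemma \ref{lem:glue-ms-approx-commutation} go through even though the observables Alice uses may in principle differ between equations; this is handled by noting that each pair $A_i^{(x)}, A_i^{(x')}$ agrees on $\ket\psi$ up to $O(\varepsilon)$ by Theorem \ref{thm:operator-sol} and standard approximate-consistency arguments, and such errors merely compound the constant $C$.
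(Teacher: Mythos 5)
Your approach follows the paper's proof closely, invoking the same sequence of lemmas (Lemmas \ref{lem:approx-lrmul}, \ref{lem:approx-cycling}, \ref{lem:approx-identity-survives-decomposition}, \ref{lem:approx-commutation-survives-decomposition}, and \ref{lem:glue-ms-approx-commutation}) for the same purposes, and correctly identifies the bookkeeping of error terms and the variable-dependence of $A_i^{(x)}$ as the central difficulties. There is, however, a recurring imprecision about which state is in play. You write $\ket{\varphi_1}$ where the paper works with the intermediate state $\ket\phi := (E^-\tensor I)\ket\psi/\norm{(E^-\tensor I)\ket\psi}$. For instance, your claim that $\squeeze{\varphi_1}{(E^-\tensor G^-)} = 1 - O(\varepsilon)/\norm{(E^-\tensor G^-)\ket\psi}^2$ is trivially an exact equality with $1$, because $\ket{\varphi_1}$ lies in the range of the projection $E^-\tensor G^-$; the nontrivial statement the paper actually proves is $\squeeze\phi{(E^-\tensor G^-)} = 1 - O(\varepsilon/\norm{(E^-\tensor I)\ket\psi}^2)$, which quantifies how close $\ket\phi$ is to $\ket{\varphi_1}$. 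Similarly, Lemma \ref{lem:approx-commutation-survives-decomposition} as stated produces the $E^-\tensor I$ projection, so it yields approximate commutation on $\ket\phi$, not directly on $\ket{\varphi_1}$. The paper therefore runs the whole approximate-relation argument on $\ket\phi$, and only at the end transfers to $\ket{\varphi_1}$ via the identity $\squeeze\phi{(E^-\tensor G^-)R(E^-\tensor G^-)} = (\beta/\alpha)\squeeze{\varphi_1}R$ with $\alpha := \norm{(E^-\tensor I)\ket\psi}^2$ and $\beta := \norm{(E^-\tensor G^-)\ket\psi}^2$, combined with the estimate $\alpha/\beta - 1 = O(\varepsilon/\beta)$. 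You do have the key ingredient $\alpha - \beta = O(\varepsilon)$, but your outline collapses the two-stage structure (first $\ket\phi$, then $\ket{\varphi_1}$), which is exactly where the factor $1/\norm{(E^-\tensor G^-)\ket\psi}^2$ in the claimed bound is sourced; making this explicit is needed for a rigorous writeup.
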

\begin{proof}
Initially we observe that while Alice may use different operators for the same variable when measuring it in different equations, just fixing one of these observables for each variable yields a set of observables $O(\varepsilon)$-approximately satisfying the relations of the Glued Magic Square game, as per the proof of Lemma \ref{lem:glue-ms-approx-commutation}.

Initially consider the state $\ket\eta = (E^-\tensor I)\ket\psi$, which is in general not normalised, and consider
\begin{align*}
\squeeze\eta{(E^-\tensor G^-)} &= \squeeze\eta{(E^-\tensor I)} - \squeeze\eta{(E^- \tensor G^+)}\\
&= \squeeze\psi{(E^-\tensor I)^3} - \squeeze\psi{(E^-\tensor I)(E^-\tensor G^+)(E^-\tensor I)}\\
&= \squeeze\psi{(E^-\tensor I)} - \squeeze\psi{(E^-\tensor G^+)}\\
&= \norm{(E^-\tensor I)\ket\psi}^2 - \squeeze\psi{(E^-\tensor G^+)}.
\end{align*}
Now note that $\squeeze\psi{(E^-\tensor G^+)}$ gives the probability that Alice's and Bob's are not consistent when Alice is asked for the odd constraint and Bob is asked to assign values to either of the topmost three variables. However, they are consistent with probability $1-O(\varepsilon)$, so we find $\squeeze\psi{(E^-\tensor G^+)} = O(\varepsilon)$. Therefore, by defining the normalised state $\ket\phi = \frac{\ket\eta}{\norm{\ket\eta}}$ and using that $(E^-\tensor I)\ket\eta = \ket\eta$, along with $G^- = I-G^+$, we find by the above that
\begin{equation}
\label{eq:gms-robust-projector-size}
\squeeze{\phi}{(E^-\tensor G^-)} = \frac1{\norm{\ket\eta}^2}\left(\norm{(E^-\tensor I)\ket\eta}^2 - \squeeze\psi{(E^-\tensor G^+)}\right) = 1 - O\left(\frac\varepsilon{\norm{\ket\eta}^2}\right),
\end{equation}
which shows that $E^-\tensor G^-$ approximately acts like the identity operator on $(E^-\tensor I)\ket\eta$.

Note that for any $i \in [9]$ it holds by Lemma \ref{lem:glue-ms-approx-commutation} that $\squeeze\phi{((A_iEA_i^*E^*)\tensor I)} = 1 - O\left(\frac\varepsilon{\norm{\ket\eta}^2}\right)$. We can use this to show that $\mathcal S_1$ approximately satisfies the relations required by the Magic Square by initially using Lemma \ref{lem:approx-identity-survives-decomposition} to conclude that the relation approximately holding on $\ket\psi$ also does so on $\ket\phi$ (note that here $\ket\phi$ is normalised, while Lemma \ref{lem:approx-identity-survives-decomposition} is formulated without that assumption, so we perform a rescaling).
\begin{align*}
1-O(\varepsilon) &= \squeeze\psi{((A_1A_2A_3)\tensor I)}\\
&= \squeeze\phi{((A_1A_2A_3)\tensor I)} - O\left(\frac\varepsilon{\norm{\ket\eta}^2}\right)\\
&= \squeeze\phi{((E^-\tensor I)^6(A_1A_2A_3)\tensor I)} - O\left(\frac\varepsilon{\norm{\ket\eta}^2}\right)\\
&= \squeeze\phi{((E^-A_1E^-E^-A_2E^-E^-A_3E^-)\tensor I)} - O\left(\frac\varepsilon{\norm{\ket\eta}^2}\right)
\end{align*}
Note that in the third equality we have merely used that $E^-\tensor I$ is idempotent, and that it preserves $\ket\phi$. In the fourth equality we have used Lemma \ref{lem:approx-lrmul} and \ref{lem:approx-cycling} to insert the commutation relations between $A_i$ and $E^-$ which we get from Lemma \ref{lem:glue-ms-approx-commutation} and use Lemma \ref{lem:approx-commutation-survives-decomposition} to restrict to $\ket\phi$.

Similarly one can show that the operators $E^-A_iE^-$ all $O\left(\varepsilon/\norm{\ket\eta}^2\right)$-approximately satisfy the relations required of the Magic Square game, when applying them to the state $\ket\phi$, so one may be tempted to end here. However $\mathcal S_1$ uses another state, so we need to translate between those two.

Therefore, suppose $R$ is a relation of the Magic Square game which is approximately satisfied for $\ket\phi$ (i.e. a product of observables approximately preserving this state, for example we could have $R = A_1A_2A_3$ as we discussed above), i.e. that $\squeeze\phi R = 1-O\left(\varepsilon/\norm{\ket\eta}^2\right)$. For ease of notation, let $\alpha := \norm{\ket\eta}^2$ and $\beta := \norm{(E^-\tensor G^-)\ket\psi}^2$; then it holds that $\ket{\varphi_1} = \frac{(E^-\tensor G^-)\ket\psi}{\sqrt\beta}$ and the assumption is that $\squeeze\phi R = 1-O(\varepsilon/\alpha)$. Then by recalling from \eqref{eq:gms-robust-projector-size} that $\squeeze\phi{E^-\tensor G^-} = 1-O\left(\varepsilon/\alpha\right)$, we find using Lemma \ref{lem:approx-lrmul} that
\begin{align*}
1-O\left(\varepsilon/\alpha\right) &= \squeeze\phi{(E^-\tensor G^-)R(E^-\tensor G^-)}\\
&= \frac1\alpha \cdot {\squeeze\psi{(E^-\tensor I)(E^-\tensor G^-)R(E^-\tensor G^-)(E^-\tensor I)}}\\
&= \frac1\alpha \cdot {\squeeze\psi{(E^-\tensor G^-)R(E^-\tensor G^-)}}\\
&= \frac\beta\alpha\squeeze{\varphi_1}R,
\end{align*}
and thus we have $\squeeze{\varphi_1}R = \alpha/\beta - O(\varepsilon/\beta)$. However by using that the square of the norm of a vector is its inner product with itself, we find:
\begin{equation*}
\frac{\alpha}{\beta}-1 = \frac{\alpha-\beta}\beta = \frac{\squeeze\psi{(E^-\tensor(G^--I))}}{\beta} = \frac{\squeeze\psi{(E^-\tensor G^+)}}{\beta} = O\left(\varepsilon/\beta\right),
\end{equation*}
where we used the same observation as just before \eqref{eq:gms-robust-projector-size} to conclude that $\squeeze\psi{(E^-\tensor G^+)} = O(\varepsilon)$. Combining this with the previous we in total find $\squeeze{\varphi_1}R = 1-O(\varepsilon/\beta) = 1-O(\varepsilon/\norm{(E^-\tensor G^-)\ket\psi}^2)$. Therefore we get that $\mathcal S_1$ is an $\left(1-O\left(\varepsilon/\norm{(E^-\tensor G^-)\ket\psi}^2\right)\right)$-optimal strategy for the Magic Square game, since we can perform similar proofs for Bob's strategy. Furthermore, the same proofs go through for $\mathcal S_2$, as long as $(F^-\tensor H^-) \ket\psi \neq 0$.
\end{proof}
}

\end{document}